\newcommand\blfootnote[1]{%
  \begingroup
  \renewcommand\thefootnote{}\footnote{#1}%
  \addtocounter{footnote}{-1}%
  \endgroup
}
\theoremstyle{definition}
\renewcommand{\tilde}{\widetilde}
\renewenvironment{pmatrix}{\lefto (\begin{matrix}}{\end{matrix}\right )}
\def\ba#1\ea{\begin{align*}#1\end{align*}}	
\def\ban#1\ean{\begin{align}#1\end{align}}	
\def\bac#1\eac{\vspace{\abovedisplayskip}{\par\centering$\begin{aligned}#1\end{aligned}$\par}\addvspace{\belowdisplayskip}}	
\newcommand{\lefto}{\mathopen{}\left}
\newcommand\mydots{\makebox[1em][c]{.\hfil.\hfil.}}
\newtheorem{theorem}{Theorem}
\newtheorem{lemma}{Lemma}
\newtheorem{definition}{Definition}
\newtheorem{proposition}{Proposition}
\newtheorem{remark}{Remark}
\newcommand{\vast}{\bBigg@{3}}
\newcommand{\Vast}{\bBigg@{4}}
\title{Canonical Conditions for \\ \vspace{-9pt} K/2 Degrees of Freedom }
\author{\IEEEauthorblockN{Recep G\"{u}l$^1$, David Stotz$^2$, Syed Ali Jafar$^3$, Helmut B\"{o}lcskei$^4$, and Shlomo Shamai (Shitz)$^5$} \newline
\IEEEauthorblockA{$^1$ Dept. IT \& EE, ETH Z\"urich, Z\"urich, Switzerland \\
$^2$ Kantonsschule Schaffhausen, Schaffhausen, Switzerland \\
$^3$ University of California at  Irvine, CA, USA \\
$^4$ Dept. IT \& EE and Dept. Math., ETH Z\"urich, Z\"urich, Switzerland \\
 $^5$   Dept. EE, Technion-Israel Institute of Technology, Haifa, Israel \\
}}
\begin{document}
\maketitle

\begin{abstract}
 We present a necessary and sufficient condition for $1/2$ degree of freedom for each user in constant $K$-user single-antenna interference channels. This condition applies to all channel topologies, i.e., to fully-connected channels as well as channels that have individual links absent, reflected by corresponding zeros in the channel matrix. Moreover, it captures the essence of interference alignment by virtue of being expressed in terms of a generic injectivity condition that guarantees separability of signal and interference. Finally, we provide codebook constructions achieving $1/2$ degree of freedom for each user for all channel matrices satisfying our condition.
\end{abstract}
\section{INTRODUCTION}
\blfootnote{The material in this paper was presented in part at the 2016 and the 2018 IEEE International Symposium on Information Theory (ISIT). \par The work of S. Shamai was supported by the
European Union's Horizon 2020 Research And Innovation Programme,
grant agreement no. 694630.} 

Cadambe and Jafar \cite{Cadambe08,Jafar11} proposed a signaling scheme---known as interference alignment---that exploits time-frequency selectivity to achieve $K/2$ degrees of freedom (DoF) in $K$-user single-antenna interference channels (ICs). In \cite{Etkin09} and \cite{Motahari14} it was shown that $K/2$ DoF can also be achieved in ICs with constant channel matrix, i.e, in the absence of channel selectivity. Gou et al. \cite{Gou11} furthermore demonstrated that even in the finite-state compound constant channel setting $K/2$ DoF can be achieved. Wu et al. \cite{Wu15} developed a general formula for the number of DoF in single-antenna ICs, extended to vector ICs in \cite{Stotz162}. This formula can, however, be difficult to evaluate as it is expressed in terms of R\'enyi information dimension \cite{Renyi59}. Building on the work by Wu et al. \cite{Wu15} and a breakthrough result in fractal geometry by Hochman \cite{Hochman14}, Stotz and B\"olcskei \cite{Stotz16} derived a DoF-formula for single-antenna ICs with constant channel matrix which is exclusively in terms of Shannon entropy. Based on this formula, the present paper establishes a necessary and sufficient condition for $1/2$ DoF for each user in constant $K$-user single-antenna ICs. This condition captures the essence of interference alignment by virtue of being expressed in terms of a generic injectivity condition that guarantees separability of signal and interference.  \par

\textit{Relation to prior work.} It was shown in \cite{Madsen05} that the number of DoF in a $K$-user fully-connected IC\footnote{Throughout the paper IC refers to a constant single-antenna IC.} is upper bounded by $K/2$. What is more, almost all IC matrices allow $K/2$ DoF, albeit an explicit characterization of this almost all set does not seem to be available \cite{Motahari14}. It is known \cite{Etkin09}, though, that i) $K/2$ DoF cannot be achieved if all elements of the IC matrix are rational numbers and  ii) $K/2$ DoF can be achieved if the diagonal\footnote{Throughout the paper, the term diagonal actually refers to the main diagonal.} elements of the IC matrix are irrational algebraic numbers and the off-diagonal elements are rational numbers. Further algebraic conditions on IC matrices to allow $K/2$ DoF were identified in \cite[Th.~7 and Th.~8]{Wu15}. An explicit and almost sure sufficient condition for $K/2$ DoF was reported in \cite{Stotz16}; the proof of this result rests on requiring linear independence---over the rational numbers---of monomials in the off-diagonal channel coefficients. We note, however, that the algebraic nature of these conditions renders them somewhat brittle. 
\par
\textit{Contributions.}
The necessary and sufficient condition for $1/2$ DoF for each user we obtain applies to all channel topologies, i.e., to fully-connected channels as well as channels that have individual links absent, reflected by corresponding zeros in the IC matrix. Moreover, we provide codebook constructions achieving full DoF in all ICs satisfying our condition.

\textit{Notation.} We use uppercase letters for random variables, and lowercase letters for deterministic quantities. Matrices are represented by boldface uppercase letters, and sets by calligraphic letters. For $x \in \mathbb{R}$, $\lfloor x \rfloor$ denotes the largest integer not exceeding $x$, and $|x|$ refers to the absolute value of $x$. $\mathbb{N}$, $\mathbb{Q}$, and $\mathbb{R}$ stand for the natural numbers including zero, the rational numbers, and the real numbers, respectively. Entropy is denoted by $H(\cdot)$, differential entropy by $h(\cdot)$, $d (\cdot)$ refers to Rényi information dimension \cite{Renyi59}, and $I(X;Y)$ is the mutual information between the random variables $X$ and $Y$. $\mathbb{E}$ designates the expectation operator.

\textit{Outline of the paper.} In Section \ref{secsystem}, we introduce the system model and we state definitions needed throughout the paper. Section \ref{secmain} presents the main result. In Section \ref{sec:ideas}, we develop the mathematical tools required in the proof of its sufficiency part, which, in turn, is established in Section \ref{secsuffproof}. In Section \ref{secbalancing}, we introduce the ``entropy balancing" idea underlying the proof---presented in Section \ref{proofthm1}---of the necessity part. In Section \ref{section::3UserNonFullyConnected}, we present a strengthened version of the necessity part of our main result for the $3$-user case. Section \ref{sec:applications} provides a corollary of our main statement, which, we feel, is of independent interest. The appendices collect various technical results.

\section{SYSTEM MODEL} \label{secsystem}
We consider a single-antenna $K$-user IC with $K \geqslant 3$, channel matrix $\mathbf{H}=(h_{ij})_{1\leqslant i,j\leqslant K}\in \mathbb{R}^{K\times K}$, and input-output relation 
\begin{equation*}
Y_{i}=\sqrt{\mathsf{snr}}\sum_{j=1}^{K}h_{ij}X_{j}+Z_{i},\,\, \, \, \, i=1,\!\mydots, K,
\end{equation*}
where $X_i \in \mathbb{R}$ is the input at the $i$-th transmitter, $Y_i \in \mathbb{R}$ is the output at the $i$-th receiver, and $Z_i \in \mathbb{R}$ is noise of absolutely continuous distribution satisfying $h(Z_i) > - \infty$ and $H(\lfloor Z_i \rfloor ) < \infty$. The $X_i$ are assumed to be statistically independent across transmitters and noise is taken to be i.i.d. across receivers and channel uses. Note that we exclude the case $K=2$, as here both users can achieve $1/2$ DoF simply through time sharing. We restrict ourselves to real-valued signals and channel matrices for simplicity of exposition.  An extension to the complex-valued case is possible based on a formula for the DoF of complex channel matrices provided in \cite{Stotz162}. The IC matrix $\mathbf H$ is assumed to be  known perfectly at all transmitters and receivers and we  take $h_{ii}\neq 0$, for $i=1,\mydots,K$, to avoid situations where direct links between  transmitter-receiver  pairs are absent. $\mathbf H$ is said to be fully connected if $h_{ij} \neq 0$, for all $i,j=1,\mydots, K$.  \par

We impose the average power constraint
 \begin{equation*} 
\mathbb{E} \! \left[X_i^{2} \right] \leqslant 1, \hspace{4mm} i=1, \dots, K, 
\end{equation*}
 and define the total number of DoF of the IC with channel matrix $\mathbf{H}$ as  
 \begin{equation} \label{definitionTotalDoF}
\mathsf{DoF}(\mathbf{H}):=\mathop{\lim\sup}_{\mathsf{snr}\rightarrow\infty}\frac{\overline{C}(\mathbf{H};\mathsf{snr})}{\frac{1}{2}\log \mathsf{snr}},
\end{equation}
where $\overline{C}(\mathbf{H};\mathsf{snr})$ stands for the sum-capacity of the IC. To define the DoF achieved by individual users, we start from the following multi-letter bound on the total number of DoF \cite[Eq. 137]{Wu15}. For all $\varepsilon >0$ and sufficiently large $m$, there exist $n \in \mathbb{N}$ and independent $n$-sequences of random variables $X_1^n, \mydots, X_K^n$ with $H(\lfloor X^n_i \rfloor) < \infty$, $i=1,\mydots, K$, such that
\begin{equation}  \label{rateDoFInequality}
\text{DoF}(\mathbf{H}) \leqslant  \frac{1}{mn}  \sum_{i=1}^{K}  \left\{ H \! \left( \left[ \sum_{j=1}^{K} h_{ij} X^n_j \right]_{\! m} \right)-H \!\left( \left[ \sum_{j \neq i}^{K} h_{ij} X^n_j \right]_{\! m} \right) \right\} +\varepsilon ,
\end{equation} 
where  $[x]_m= \frac{\lfloor 2^m x \rfloor}{2^m}$. We now define the number of DoF  of  user $i$ in $\mathbf{H}$ as
\begin{equation} \label{rateDoFDefinition}
\text{DoF}_i := \frac{1}{mn}  \left\{  H \! \left( \left[ \sum_{j=1}^{K} h_{ij} X^n_j \right]_{\! m} \right)-H \! \left( \left[ \sum_{j \neq i}^{K} h_{ij} X^n_j \right]_{\! m} \right)  \right\} ,
\end{equation} 
 for $i=1, \mydots, K$. Noting that for $\varepsilon, m,$ and $n$ in (\ref{rateDoFInequality}), the following holds \cite[Eq. 126]{Wu15}
 \begin{equation}  \label{rateDoFInequality2}
\text{DoF}(\mathbf{H})  \geqslant   \frac{1}{mn}  \sum_{i=1}^{K} \left\{  H \left( \left[ \sum_{j=1}^{K} h_{ij} X^n_j \right]_m \right)-H \left( \left[ \sum_{j \neq i}^{K} h_{ij} X^n_j \right]_m \right) \right\}  -\varepsilon,
\end{equation} 
combining  (\ref{rateDoFInequality}) and  (\ref{rateDoFInequality2}), and letting $\varepsilon \rightarrow 0$ yields
\begin{equation*} \text{DoF}(\mathbf{H}) = \sum_{i=1}^{K} \text{DoF}_i. \end{equation*}
We note that, thanks to Lemma \ref{lemma::1/2DoFMeansAtMostK/2Total} in Appendix \ref{subsection::ImplicationsOfRequiring1.2DoF}, if each user is to achieve at least $1/2$ DoF, i.e., $\text{DoF}_i \geqslant 1/2, i=1, \mydots, K$, then all input distributions yield the same $\text{DoF}_i, i=1, \mydots, K$.

  \par


\par 
 \section{MAIN RESULT} \label{secmain}
Before stating the main result, we need to introduce some formalisms. Denote the vector containing the off-diagonal elements of $\mathbf H$ by $\mathbf  {\widehat h}\in \mathbb R^{K(K-1)}$, and  let $f_1,f_2, ...$ be the monomials in the entries of $\mathbf {\widehat h}$ as follows:   $f_1,...,f_{\varphi(d)}$  are  the monomials of degree\footnote{ The ``degree'' of the monomial $x_1^{\alpha_1}x_2^{\alpha_{2}} \mydots \hspace{0.9mm} x_n^{\alpha_n}$ in the variables $x_1, x_2,  \mydots, x_n$ is given by $\alpha_1 + \alpha_2 +\mydots + \alpha_n$.} not larger than $d$, with $d \geqslant 0$, where  \cite[Ch.~9, Lem.~4]{Cox07} 
\begin{equation} \label{combinatorialPhi} {\varphi(d)}:=\binom{K(K-1)+d}{d}. \end{equation}  
\par 
 \begin{definition}(Scaling of IC matrices) \label{defScaling}
 We say that $\tilde{\mathbf{H}}$ is a scaled version of $\mathbf{H}$ \cite{Etkin09}, if $\tilde{\mathbf{H}}$ can be obtained by scaling (by nonzero real numbers) rows and columns of $\mathbf{H}$ according to $\tilde{\mathbf{H}}=\mathbf{D}\mathbf{H} \mathbf{D}'$, where $\mathbf{D}$ and $\mathbf{D}'$ are diagonal matrices with nonzero diagonal entries. 
 \end{definition} 
 
\begin{definition}(Channel topology) \label{def::topology}
Consider an IC with channel matrix $\mathbf{H}$. The topology of the IC is determined by the locations of the zeros, referred to as zero-set, in $\mathbf{H}$. Specifically, $h_{ij}=0$ reflects the absence of a link between transmitter $j$ and receiver $i$. 
\end{definition}

 
We proceed to stating our main result, a necessary and sufficient condition on the IC matrix $\mathbf{H}$ to allow for $1/2$ DoF for each user. In addition, for ICs satisfying this condition, we provide an explicit construction of codebooks achieving $1/2$ DoF for each user. Throughout the paper ``achieving (or allowing) $1/2$ DoF'' will frequently mean that at least $1/2$ DoF is achieved (or allowed for). For simplicity of exposition, we shall not distinguish between these cases and the case where exactly $1/2$ DoF is achieved (or allowed for). 
\begin{theorem}\label{thm1} 
Consider a $K$-user IC with channel matrix $\mathbf{H}$, and let $\varphi(d)$ be as in (\ref{combinatorialPhi}).
For $d \geqslant 0$ and $N \geqslant 1$, let 
\ba \mathcal W^{(\mathbf{H})}_{N,d} := \left \{\sum_{i=1}^{\varphi(d)} a_i f_i( \hat{\mathbf h}) \; : \; a_1,..., a_{\varphi (d)} \in \{0,...,N-1\} \right \} 
\ea
and set
\ban 	\mathcal W^{(\mathbf{H})}:= \bigcup_{d\geqslant 0}\bigcup_{N\geqslant 1} \mathcal W^{(\mathbf{H})}_{N,d}.	\label{wset} \ean
For each user to achieve  $1/2$ DoF, the following condition is sufficient for all and necessary for almost all IC matrices $\mathbf{H}$:  Either the IC matrix $\mathbf{H}$ itself or at least one scaled---in the sense of Definition \ref{defScaling}---version thereof satisfies the following condition: \newline \indent
For each $i=1, \! \mydots, K$, the map
\begin{equation*} \setlength{\abovedisplayskip}{8pt} \setlength{\belowdisplayskip}{7pt}
\qquad \qquad \qquad \qquad
\begin{matrix}
\mathcal{W^{(\mathbf{H})}}\times \mathcal{W^{(\mathbf{H})}}\rightarrow \mathcal{W^{(\mathbf{H})}}+h_{ii}\mathcal{W^{(\mathbf{H})}}\\[4pt]
(w_{1}, w_{2})\mapsto w_{1}+h_{ii}w_{2}
\end{matrix}\qquad\qquad \qquad 
( \ast)
\end{equation*} 
\indent is injective. \par
\textit{Proof}. See Section \ref{proofthm1}. 
\end{theorem}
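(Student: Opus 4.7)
The theorem splits into a sufficiency direction (assume $(\ast)$, construct codebooks attaining $1/2$ DoF per user) and a necessity direction (from $\text{DoF}_i\geq 1/2$ for all $i$, deduce $(\ast)$ for almost all $\mathbf{H}$). Since DoF are invariant under row and column scaling of $\mathbf{H}$, I may assume throughout that $\mathbf{H}$ itself satisfies $(\ast)$, absorbing the ``or a scaled version'' clause into the setup.

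\textbf{Sufficiency.} For each transmitter $j$, the natural codebook is a suitably rescaled truncation $\gamma_{N,d}\cdot\mathcal{W}^{(\mathbf{H})}_{N,d}$ of the $\mathcal{W}^{(\mathbf{H})}$ set, with $\gamma_{N,d}$ chosen to meet the unit average-power constraint. The decisive algebraic observation is that, because $\mathcal{W}^{(\mathbf{H})}$ is generated by monomials in the off-diagonal entries $\hat{\mathbf h}$ with nonnegative integer coefficients, the interference $\sum_{j\neq i}h_{ij}X_j$ at receiver $i$ lies in $\gamma_{N,d}\cdot\mathcal{W}^{(\mathbf{H})}_{N',d+1}$ with $N'=KN$, while the desired signal has the form $h_{ii}X_i\in h_{ii}\cdot\gamma_{N,d}\cdot\mathcal{W}^{(\mathbf{H})}_{N,d}$. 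The injectivity of $(\ast)$ then guarantees that the noise-free received signal $\sum_j h_{ij}X_j$ decomposes uniquely into its interference component $w_1\in\mathcal{W}^{(\mathbf{H})}$ and its signal component $h_{ii}w_2$. Substituting this into (\ref{rateDoFDefinition})---or, equivalently, into the Shannon-entropy DoF formula of \cite{Stotz16}---makes $H([\sum_{j\neq i}h_{ij}X^n_j]_m)$ cancel against the corresponding part of $H([\sum_{j}h_{ij}X^n_j]_m)$, leaving $\text{DoF}_i$ equal to the normalized Shannon entropy of $X_i^n$. A joint choice of $N$, $d$, and $m$ that sends $\log_2|\mathcal{W}^{(\mathbf{H})}_{N,d}|/m\to 1/2$ while keeping $\gamma_{N,d}$ within the power budget then drives $\text{DoF}_i\to 1/2$ simultaneously for every $i$.

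\textbf{Necessity.} Suppose $\text{DoF}_i\geq 1/2$ for every user. By Lemma \ref{lemma::1/2DoFMeansAtMostK/2Total} the total number of DoF equals $K/2$, which by \cite{Madsen05} is the known upper bound, so every inequality in (\ref{rateDoFInequality}) must be asymptotically tight and in fact $\text{DoF}_i=1/2$ for all $i$. The ``entropy balancing'' idea developed in Section \ref{secbalancing} exploits precisely this tightness: the $K$ simultaneous equalities $\text{DoF}_i=1/2$, read off from (\ref{rateDoFDefinition}), impose a rigid joint constraint on the supports of $(X_1^n,\mydots,X_K^n)$. The key step is to show that, for almost all $\mathbf{H}$, this constraint forces the supports to sit (up to a row/column scaling as in Definition \ref{defScaling}) on truncations of $\mathcal{W}^{(\mathbf{H})}$, and that any collision of the form $w_1+h_{ii}w_2=w_1'+h_{ii}w_2'$ with $(w_1,w_2)\neq(w_1',w_2')$ on these supports would strictly decrease $\text{DoF}_i$ below $1/2$; in other words, compatibility with $\text{DoF}_i=1/2$ is equivalent to injectivity of the map $(\ast)$.

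\textbf{Main obstacle.} I expect the necessity direction to dominate the difficulty. Sufficiency is essentially an injectivity-plus-counting argument; its only technical nuisance is reconciling the set-theoretic injectivity of $(\ast)$ with the quantizer $[\cdot]_m$ in (\ref{rateDoFDefinition}), which is controlled by taking $m$ large enough relative to the minimum separation between distinct elements of $\mathcal{W}^{(\mathbf{H})}_{N,d}+h_{ii}\mathcal{W}^{(\mathbf{H})}_{N,d}$. Necessity, by contrast, must rule out exotic input distributions that could achieve $K/2$ total DoF without exhibiting the $\mathcal{W}^{(\mathbf{H})}$-alignment structure; this is exactly where the ``almost all'' qualifier becomes indispensable, letting us excise a measure-zero set of pathological $\mathbf{H}$ on which entropy balancing would otherwise fail to pin down the injectivity of $(\ast)$.
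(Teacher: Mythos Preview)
Your sufficiency sketch has the right skeleton---interference lands in $\mathcal{W}_{N',d+1}$, and injectivity of $(\ast)$ splits the received entropy as $H(h_{ii}\Psi_i)+H(\sum_{j\neq i}h_{ij}\Psi_j)$---but two ingredients are missing. First, finite-support codebooks $\gamma_{N,d}\,\mathcal{W}_{N,d}$ have bounded entropy, so in (\ref{rateDoFDefinition}) the ratio $H([\cdot]_m)/m$ tends to $0$ for each fixed $(N,d)$; you cannot repair this by letting $N,d,m$ vary jointly, because $\text{DoF}_i$ is defined per input distribution. The paper instead uses \emph{self-similar} inputs $X_i=\sum_{k\geq 0}r^k\Psi_{i,k}$ and Proposition~\ref{prop:apply}, with $\log(1/r)=2\log|\mathcal{W}_{N^{d},d}|$ playing the normalizing role you assign to $m$. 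Second, and more importantly, the entropy split only gives $\text{DoF}_i\geq \min\{1,(H(\Psi_i)+H(\text{interf.}))/\log(1/r)\}-\min\{1,H(\text{interf.})/\log(1/r)\}$, and to force this close to $1/2$ you must control $H(\text{interf.})/H(\Psi_i)$, i.e., show $\log|\mathcal{W}_{N^{d+1},d+1}|/\log|\mathcal{W}_{N^{d},d}|\to 1$. This is the ``alignment cardinality'' Lemma~\ref{lem}, a separate combinatorial fact that you do not mention; without it the interference entropy could swamp the normalization and the bound collapses.

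Your necessity direction takes the wrong route. The paper does \emph{not} argue that optimal inputs must have support inside truncations of $\mathcal{W}^{(\mathbf{H})}$---there is no mechanism that would force this, and indeed the $V_j$ appearing in the proof are arbitrary finite-entropy variables. Instead the argument is by contradiction: if $(\ast)$ fails for $\mathbf{H}$ and every scaled version, then by Remark~\ref{remarkEquivalentFormulation} some diagonal entry satisfies $h_{ii}=P/Q$ for integer-coefficient polynomials $P,Q$ in the off-diagonal entries. Entropy balancing (Lemma~\ref{lem1}) applied to the auxiliary variables $V_1,\dots,V_K$ from Proposition~\ref{prop} gives $H(\sum_j h_{ij}V_j)/H(V_i)=2+\mathcal{O}(\varepsilon)$. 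The core of the argument is then a double induction---over the maximum degree and over the number of terms of $P,Q$---driven by additive-combinatorics tools (Tao's sumset inequality, Lemma~\ref{lemtao}, and \cite[Th.~14]{Wu15}) which exploits the algebraic relation $h_{ii}Q=P$ to force $H(\sum_j h_{ij}V_j)/H(V_i)=1+\mathcal{O}(\varepsilon)$ instead, the desired contradiction. The ``almost all'' qualifier enters solely through Proposition~\ref{prop}, not through any structural claim about the supports of optimal inputs.
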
 \par
\begin{remark} \label{remark1} 
As made explicit in the proof, the condition in Theorem \ref{thm1} is sufficient for ``all''---with respect to the channel coefficients\footnote{Throughout the paper, almost all sets are understood to be with respect to the channel coefficients.}---channel matrices $\mathbf{H}$ and necessary
for ``almost all" ICs. We hasten to add that we do not know of a way to test whether a given fully-connected $\mathbf{H}$ falls into the corresponding measure-zero set of exceptions. Finally, we note that
combinatorial and number-theoretic arguments can be used to establish necessity in the $3$-user case for ``all'' non-fully-connected ICs. This proof is very explicit and pedestrian and it seems unclear how the corresponding arguments could be extended to the $K$-user case.
\end{remark}
\begin{remark} \label{remarkDoFScaling}
We note that for all IC matrices $\mathbf{H}$, scaling according to Definition \ref{defScaling} does not change the total number of DoF \cite[Lemma 1]{Etkin09}. 
\end{remark}

\begin{remark} \label{remarkChannel}
Thanks to $\mathcal{W^{(\mathbf{H})}}$ being made up of integer linear combinations of monomials in the off-diagonal elements of $\mathbf{H}$, Condition ($ \ast$) is exclusively in terms of channel coefficients $h_{ij}$. 
\end{remark}

\begin{remark} \label{remarkEquivalentFormulation}
An equivalent formulation of Condition~($*$) that will turn out useful later is as follows: For each  $i=1,...,K$ and for all nonzero polynomials $P$, $Q$ in the variables $h_{ij}$, $i\neq  j$, with integer coefficients
 \begin{equation} \label{alternativeFormulationOfConditionStar}h_{ii}Q-P\neq 0. \end{equation}
To establish this equivalence, we first note that Condition~($*$) can alternatively be expressed as follows: For all $(w_1,w_2) \neq (\tilde{w}_1, \tilde{w}_2)$, 
\begin{equation} \label{rem3}
h_{ii } (w_2 - \tilde{w}_2) -(\tilde{w}_1-w_1) \neq 0.
\end{equation}
The argument is concluded by realizing that $w_2- \tilde{w}_2$ and $\tilde{w}_1-w_1$ are, by construction, polynomials in $h_{ij}, i \neq j$, with integer coefficients.  
\end{remark} \par

\section{Proof Idea and Auxiliary Results for Sufficiency}\label{sec:ideas}

We first describe the central ideas behind the proof of the sufficiency part of the statement in Theorem~\ref{thm1}. Recall that this part of the statement applies to ``all" IC matrices $\mathbf{H}$. The tenets of the proof are i) a condition guaranteeing separability of signal and interference and ii) an alignment cardinality constraint for the codebook. 
We begin  by restating a result from \cite{Stotz16}  needed in the proof.
\begin{proposition}[A simple variation of Prop.~1 in \cite{Stotz16}]\label{prop:apply}
Let $r \in (0,1)$  and let $\Psi_1,...,\Psi_K$ be independent discrete random variables. Consider the self-similar inputs $X_i= \sum_{k=0}^{\infty} r^k \Psi_{i,k}$, $i=1,...,K$, where $\{\Psi_{i,k} : k \geqslant 0 \}$ are independent copies of $\Psi_i$. Then, for all $\mathbf H$, 
\ban \text{DoF}_i \geqslant \min \lefto\{\frac{H\lefto (\sum_{j=1}^Kh_{ij}\Psi_{j}\right )}{\log (1/  r)}, 1\right  \} - \min \lefto \{\frac{H\lefto (\sum_{j\neq i}^K h_{ij}\Psi_{j}\right )}{\log (1/  r)}, 1\right \} \label{eq:apply}.  \ean

\end{proposition}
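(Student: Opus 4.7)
My plan is to follow the argument for \cite[Prop.~1]{Stotz16} essentially verbatim, and simply avoid summing over users at the end. First, I would specialize the per-user multi-letter lower bound obtained by combining (\ref{rateDoFInequality2}) with the definition (\ref{rateDoFDefinition}) to the single-letter case $n=1$ and substitute the self-similar inputs $X_j=\sum_{k\geqslant 0} r^k \Psi_{j,k}$. This reduces the task to controlling, as $m\to\infty$, the two quantized entropies
\[
\tfrac{1}{m}H\!\left(\left[\sum_{j=1}^{K} h_{ij} X_j\right]_{\!m}\right)\quad\text{and}\quad \tfrac{1}{m}H\!\left(\left[\sum_{j\neq i} h_{ij} X_j\right]_{\!m}\right),
\]
the first from below, the second from above.

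The structural observation that drives everything is that for any real coefficients $c_1,\dots,c_K$, the linear combination
\[
\sum_{j=1}^{K} c_j X_j \;=\; \sum_{k\geqslant 0} r^{k} A_k,\qquad A_k:=\sum_{j=1}^{K} c_j \Psi_{j,k},
\]
is again a self-similar sum, with i.i.d.\ ``digits'' $A_k$ distributed as $A=\sum_j c_j\Psi_j$. Both the received-signal variable (all $c_j=h_{ij}$) and the pure-interference variable (the same with $c_i=0$) fit into this template, with respective digit entropies $H(\sum_j h_{ij}\Psi_j)$ and $H(\sum_{j\neq i} h_{ij}\Psi_j)$. I would then invoke exactly the self-similar-measure entropy estimates used in \cite{Stotz16}: truncating the sum after $M\approx m/\log_2(1/r)$ terms changes $[\cdot]_m$ only by a bounded amount, and the truncated sum carries entropy close to $M\,H(A)$, clipped by the ambient support's dimension $1$. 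This gives
\[
\liminf_{m\to\infty}\tfrac{1}{m} H\!\left(\left[\sum_{j=1}^{K} h_{ij} X_j\right]_{\!m}\right)\;\geqslant\; \min\!\left\{\tfrac{H(\sum_j h_{ij}\Psi_j)}{\log(1/r)},\,1\right\}
\]
for the signal term, and the matching upper bound $\min\{H(\sum_{j\neq i} h_{ij}\Psi_j)/\log(1/r),\,1\}$ for the interference term.

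Subtracting these two bounds and substituting into the per-user lower bound on $\text{DoF}_i$ will yield (\ref{eq:apply}) directly. The only deviation from \cite[Prop.~1]{Stotz16} is that I do not carry out the sum $\sum_i$ at the end, since the digit decomposition treats each receiver independently and so the user-by-user inequality is already available inside the proof of the aggregate statement. The main technical obstacle is the lower bound on the signal's quantized entropy — this is the step that normally relies on a Hochman-type non-concentration estimate for self-similar measures — but that step is handled once and for all in \cite{Stotz16}, and the present variation inherits it without modification. Because the right-hand side of (\ref{eq:apply}) is clipped at $1$ in each term and the bound is vacuous whenever it becomes non-positive, no additional hypotheses on $\mathbf H$ are needed beyond what is already assumed in \cite{Stotz16}.
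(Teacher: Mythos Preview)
Your proposal is correct and matches the paper's own proof, which simply points to \cite[Eqs.~26--27]{Stotz16} and \cite[Eqs.~8--10]{Stotz16}---these being precisely the self-similar digit-entropy/information-dimension computation and the per-user achievability bound whose content you have spelled out in detail. One small caveat: (\ref{rateDoFInequality2}) bounds the \emph{sum} $\text{DoF}(\mathbf H)$ rather than individual $\text{DoF}_i$, so the per-user starting inequality should be taken from \cite[Eqs.~8--10]{Stotz16} directly rather than from (\ref{rateDoFInequality2}), but that is exactly the citation the paper supplies.
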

\begin{proof}
Follows directly from \cite[Eqs. 26-27]{Stotz16} and \cite[Eqs. 8-10]{Stotz16}.
\end{proof}

The strategy for proving the sufficiency part of Theorem~\ref{thm1} will be to employ Proposition~\ref{prop:apply} with the $\Psi_i$ i.i.d.\ uniform on $\mathcal W^{(\mathbf{H})}_{N^d,d}$ for some $d\geqslant 0$, $N> K-1$, and to show that the corresponding expression on the right-hand side (RHS) of \eqref{eq:apply} can be made to be arbitrarily close to $1/2$ for all $i=1,...,K$ concurrently. Henceforth, we shall drop the superscript in $\mathcal W^{(\mathbf{H})}_{N^d,d}$ whenever it is clear from the context. The first step in the proof establishes that $H(\sum_{j\neq i}^K h_{ij}\Psi_{j})$, i.e., the entropy of the interference part of the signal received at user $i$, can not become too large relative to the entropy of the signal component given by $H(\Psi_{i})$. This will be accomplished by noting that $\sum_{j\neq i}^K h_{ij}\Psi_{j}\in \mathcal W_{(K-1)N^d,d+1}\subseteq \mathcal W _{N^{d+1},d+1}$, where the inclusion is thanks to $N>K-1$, and then showing that the ratio $\frac{\log |\mathcal W_{N^{d+1}, d+1}|}{\log |\mathcal W_{N^{d}, d}|}$, where the denominator is equal to $H(\Psi_i)$ (owing to the assumption of $\Psi_i$ being uniformly distributed on $\mathcal W_{N^d,d}$), is close to $1$ for $d$ sufficiently large. This ``alignment cardinality" result is formalized as follows.  \par
\begin{lemma}\label{lem} 
For $N>1$, we have
\ban \liminf_{d\to \infty} \frac{\log |\mathcal W_{N^{d+1}, d+1}|}{\log |\mathcal W_{N^{d}, d}|}=1 .  \label{eq:liminf} \ean
\end{lemma}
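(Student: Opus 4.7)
The plan is to bracket $a_d := \log|\mathcal{W}_{N^d,d}|$ between two polynomials in $d$ and then invoke a polynomial-versus-exponential dichotomy. First I would show that $a_d$ is monotonically nondecreasing, which gives the easy half $\liminf_d a_{d+1}/a_d \geqslant 1$. Indeed, each element $\sum_{i=1}^{\varphi(d)} a_i f_i(\widehat{\mathbf h})$ of $\mathcal W_{N^d,d}$ also lies in $\mathcal W_{N^{d+1},d+1}$: the coefficients satisfy $a_i \in \{0, \ldots, N^d-1\}\subseteq\{0, \ldots, N^{d+1}-1\}$, and the $\varphi(d+1)-\varphi(d)$ additional monomials (those of degree exactly $d+1$) can be assigned coefficient zero. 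Hence $\mathcal W_{N^d,d}\subseteq\mathcal W_{N^{d+1},d+1}$, so $a_d$ is nondecreasing and $a_{d+1}/a_d\geqslant 1$ whenever $a_d>0$.

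Next I would pin down the polynomial bracket. For the upper bound, there are at most $(N^d)^{\varphi(d)}$ possible tuples $(a_1,\ldots,a_{\varphi(d)})$, so $a_d\leqslant d\,\varphi(d)\log N$; since $\varphi(d)=\binom{K(K-1)+d}{d}$ is a polynomial of degree $K(K-1)$ in $d$, this upper bound grows only polynomially. For the lower bound, the constant monomial $f_1\equiv 1$ contributes the subset $\{0,1,\ldots,N^d-1\}\subseteq\mathcal{W}_{N^d,d}$ (take $a_1\in\{0,\ldots,N^d-1\}$ and $a_i=0$ for $i\geqslant 2$), giving $a_d\geqslant d\log N$; since $N>1$, this is strictly positive for $d\geqslant 1$ and diverges as $d\to\infty$, so the ratio $a_{d+1}/a_d$ is well-defined for all large $d$.

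Finally, suppose for contradiction that $\liminf_d a_{d+1}/a_d \geqslant 1+\eta$ for some $\eta>0$. Then there exists $d_0$ with $a_{d+1}\geqslant (1+\eta/2)\,a_d$ for all $d\geqslant d_0$, whence $a_d\geqslant a_{d_0}(1+\eta/2)^{d-d_0}$ grows at least exponentially in $d$. This contradicts the polynomial upper bound $a_d\leqslant d\,\varphi(d)\log N$, so $\liminf_d a_{d+1}/a_d\leqslant 1$, and combined with the monotonicity bound this yields $\liminf_d a_{d+1}/a_d=1$. The argument is essentially free of obstacles: no algebraic-independence assumption on the channel coefficients and no topological hypothesis on $\mathbf{H}$ is needed, and the only minor point to verify is that $\varphi(d)$ is polynomial in $d$, which is immediate from its binomial form.
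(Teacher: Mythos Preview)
Your proof is correct and follows essentially the same approach as the paper's: monotonicity of $\mathcal W_{N^d,d}$ for the lower bound $\liminf\geqslant 1$, the cardinality estimate $|\mathcal W_{N^d,d}|\leqslant (N^d)^{\varphi(d)}$ for a polynomial upper bound on $a_d$, the constant monomial for a positive lower bound, and then the polynomial-versus-exponential contradiction. The paper carries out the same steps with slightly more explicit inequalities (e.g., bounding $\varphi(d_0+d)(d_0+d)\leqslant d^{K(K-1)+2}$ and $(1+\varepsilon)^d$ from below via the Taylor series of $e^x$), but there is no substantive difference in strategy.
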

\begin{proof}Since $\mathcal W_{N,d}\subseteq \mathcal W_{N',d'}$ for $N\leqslant N'$ and $d\leqslant d'$, we have $\log |\mathcal W_{N^{d+1}, d+1}| \geqslant \log |\mathcal W_{N^{d}, d}|$, which implies
\ba \liminf_{d\to \infty} \frac{\log |\mathcal W_{N^{d+1}, d+1}|}{\log |\mathcal W_{N^{d}, d}|}\geqslant 1 . \ea  
We establish \eqref{eq:liminf} by way of contradiction. To this end, assume that
\ba \liminf_{d\to \infty} \frac{\log |\mathcal W_{N^{d+1}, d+1}|}{\log |\mathcal W_{N^{d}, d}|}>1+\varepsilon , \ea
for some $\varepsilon>0$. Then, there exists a $d_0\geqslant 2$ such that for all $d\in \mathbb N$
\ban \frac{\log |\mathcal W_{N^{d_0+d+1}, d_0 +d +1}|}{\log |\mathcal W_{N^{d_0 +d}, d_0+d}|}>1+\varepsilon . \label{eq:repeated}\ean 
Repeated application of \eqref{eq:repeated} yields
\ba \log |\mathcal W_{N^{d_0}, d_0}| &< \frac{\log |\mathcal W_{N^{d_0+1}, d_0+1}|}{1+\varepsilon}\\ &< \ldots \\ &<  \frac{\log |\mathcal W_{N^{d_0+d}, d_0+d}|}{(1+\varepsilon)^d}. \ea 
Since $ |\mathcal W_{N^{d_0+d}, d_0+d}| \leqslant (N^{(d_0+d)} )^{\varphi(d_0+d) }$, we get 
\ban \log |\mathcal W_{N^{d_0}, d_0}| < \frac{\varphi(d_0+d)  (d_0+d) \log N}{(1+\varepsilon)^d} , 
\label{eq:toshow} \ean 
for all $d \in \mathbb{N}$. Next, note that owing to the assumption $N>1$, and thanks to $d_0 \geqslant 2$, it follows that $N^{d_0}-1>1$. Further, since $\{0,...,N^{d_0}-1\}\subseteq \mathcal W^{(\mathbf{H})}_{N^{d_0},d_0}$ for all $\mathbf{H}$, we have $|\mathcal W_{N^{d_0}, d_0}|>1$ and hence $ \log |\mathcal W_{N^{d_0}, d_0}| >0 $. The proof will be completed by establishing the contradiction $\log  |\mathcal W_{N^{d_0}, d_0}| =0$. This is accomplished by showing that  the RHS of \eqref{eq:toshow} tends to zero as $d\to \infty$.
To this end, we first note that 
\ban 	\varphi(d_0+d)&=\frac{(K(K-1) +d_0+d)!}{(K(K-1))! (d_0+d)!}\\ &=\frac{(K(K-1)+d_0+d) \cdot \ldots \cdot (d_0+d+1)}{(K(K-1))!}\\ &\leqslant \frac{(K(K-1)+d_0+d)^{K(K-1)}}{(K(K-1))!}	 \label{eq:last} \ean
and, since the largest power of $d$ occuring in the numerator of \eqref{eq:last} is $d^{K(K-1)}$, we find that
\ban \varphi(d_0+d) (d_0+d)\leqslant d^{K(K-1)+2} \label{eq:combine1} \ean  for sufficiently large $d$.
On the other hand, it follows from $e^x \geqslant x^{K(K-1)+3} /(K(K-1)+3)!$, for $x\geqslant 0$, that
\ban (1+\varepsilon)^d  = e^{d \ln (1+\varepsilon)} \geqslant \frac{(d\ln (1+\varepsilon))^{K(K-1)+3}}{(K(K-1)+3)!}  .\label{eq:combine2}\ean
Combining \eqref{eq:toshow}, \eqref{eq:combine1}, and \eqref{eq:combine2}, we finally obtain
\ba  0 & < \lim_{d\to\infty}	\frac{\varphi(d_0+d)  (d_0+d)}{(1+\varepsilon)^d}\\ &\leqslant \lim_{d\to\infty} \frac{d^{K(K-1)+2}(K(K-1)+3)! }{(d\ln (1+\varepsilon))^{K(K-1)+3}}\\ &=0,	\ea 
which completes the proof.
\end{proof}

\begin{remark}
Lemma~\ref{lem} above is inspired by \cite[Lem.~3]{RKJY14}.
\end{remark}

The second step in the proof will be concerned with the separability of signal and interference and uses the injectivity of the map in Condition~($*$) to establish that
\ban 
H \Bigg (h_{ii} \Psi_{i}+ \sum_{j\neq i}^K h_{ij}\Psi_{j}	\Bigg ) = H (h_{ii} \Psi_{i} ) + H \Bigg ( \sum_{j\neq i}^K h_{ij}\Psi_{j}	\Bigg ). \label{eq:sep}
\ean 
Upon resolving minor technicalities, Lemma \ref{lem} along with  (\ref{eq:sep}) will allow us to show that the RHS of \eqref{eq:apply} can be made to be arbitrarily close (from below) to $1/2$, for all  $i=1,...,K$ concurrently.

\section{PROOF OF SUFFICIENCY IN THEOREM \ref{thm1} } \label{secsuffproof}
Consider a channel matrix $\mathbf{H}$ satisfying Condition~($*$). As described in Section~\ref{sec:ideas}, the first step of the proof balances the entropies of the signal and interference components, $H (\Psi_{i})$ and $H \left ( \sum_{j\neq i}^K h_{ij}\Psi_{j}\right ) $, respectively. To this end, we choose $N>K-1$. As $K-1\geqslant 1$, we can apply Lemma~\ref{lem} to find a subsequence $\{\mathcal W_{N^{d_n},d_n} \}_{n\geqslant 0}$ of $\{\mathcal W_{N^d,d} \}_{d\geqslant 0}$ such that 
\ban \lim_{n\to \infty} \frac{\log |\mathcal W_{N^{d_n+1}, d_n+1}|}{\log |\mathcal W_{N^{d_n}, d_n}|}=1 . \label{eq:limit} \ean 
Next, consider the set of discrete random variables $\Psi_1^{(n)},...,\Psi_K^{(n)}$ distributed  i.i.d.\ uniformly on $\mathcal W_{N^{d_n},d_n}$ and construct the corresponding self-similar transmit signals
\begin{align*}
X_i = \sum_{k=0}^{\infty} r^k \Psi_{i,k}^{(n)},
\end{align*}
where the $\Psi_{i,k}^{(n)}$ are independent copies of $\Psi_i^{(n)}$, and $r \in (0,1)$. Using these codebooks, we now apply Proposition~\ref{prop:apply} with $r=|\mathcal W_{N^{d_n},d_n}|^{-2}$ to get that the $i$-th user, $i=1, \mydots, K$,  achieves 
\ban \min \lefto \{\frac{H\lefto (\sum_{j=1}^Kh_{ij}\Psi_{j}^{(n)}\right )}{2 \log |\mathcal W_{N^{d_n},d_n}|}, 1\right \} - \min \lefto \{\frac{H\lefto (\sum_{j\neq i}^K h_{ij}\Psi^{(n)}_{j}\right )}{2 \log |\mathcal W_{N^{d_n},d_n}|}, 1\right \}   \ean 
DoF, 
for $n\in \mathbb N$. 
Note that $\sum_{j\neq i}^K h_{ij}\Psi^{(n)}_{j}\in \mathcal W_{(K-1)N^{d_n}, d_n+1} \subseteq \mathcal W_{N^{d_n+1}, d_n+1}$ thanks to $N> K-1$. It follows from the cardinality bound for entropy that 
\ban \frac{H\lefto (\sum_{j\neq i}^K h_{ij}\Psi^{(n)}_{j}\right )}{2 \log |\mathcal W_{N^{d_n},d_n}|}\leqslant  \frac{\log |\mathcal W_{N^{d_n+1}, d_n+1}|}{2\log |\mathcal W_{N^{d_n}, d_n}|}\xrightarrow{n\to \infty} \frac{1}{2}, \label{conv}\ean 
where we used \eqref{eq:limit}.

The second step of the proof concerned with separability of signal and interference starts by establishing that
 \ban H \Bigg (h_{ii} \Psi^{(n)}_{i}+ \sum_{j\neq i}h_{ij}\Psi^{(n)}_{j}	\Bigg ) = H \Bigg (h_{ii} \Psi^{(n)}_{i}, \sum_{j\neq i}h_{ij}\Psi^{(n)}_{j}	\Bigg ) ,\label{eq:weshow}\ean
 for $i=1,...,K$. To this end, we apply the chain rule to find
\ban &H \Bigg (h_{ii} \Psi^{(n)}_{i}, \sum_{j\neq i}h_{ij}\Psi^{(n)}_{j}	\Bigg )  \label{eq:combined1}  \\&= H \Bigg (h_{ii} \Psi^{(n)}_{i}, \sum_{j\neq i}h_{ij}\Psi^{(n)}_{j} , 	h_{ii} \Psi^{(n)}_{i}+ \sum_{j\neq i}h_{ij}\Psi^{(n)}_{j}	\Bigg )\\ &=  H \Bigg (h_{ii} \Psi^{(n)}_{i}+ \sum_{j\neq i}h_{ij}\Psi^{(n)}_{j}	\Bigg ) \nonumber \\& \phantom{-} + H \Bigg (h_{ii} \Psi^{(n)}_{i}, \sum_{j\neq i}h_{ij}\Psi^{(n)}_{j}	 \; \!  \Bigg\vert   \; \!  h_{ii} \Psi^{(n)}_{i}+ \sum_{j\neq i}h_{ij}\Psi^{(n)}_{j}	\Bigg ). \label{eq:combined2} \ean
Next, we note that  injectivity of the map in Condition~($*$) implies \ban H \Bigg (h_{ii} \Psi^{(n)}_{i}, \sum_{j\neq i}h_{ij}\Psi^{(n)}_{j}	 \; \!  \Bigg\vert   \; \!  h_{ii} \Psi^{(n)}_{i}+ \sum_{j\neq i}h_{ij}\Psi^{(n)}_{j}	\Bigg )	=0 \label{eq:amounts}, \ean
which, when combined with \eqref{eq:combined1}--\eqref{eq:combined2}, yields \eqref{eq:weshow}.
From \eqref{eq:weshow} and the independence of the $\Psi_i^{(n)}$ across users $i =1, \mydots, K$, it  now follows that 
\ban \frac{H\lefto (\sum_{j=1}^Kh_{ij}\Psi^{(n)}_{j}\right ) - H\lefto (\sum_{j\neq i}^K h_{ij}\Psi^{(n)}_{j}\right ) }{2 \log |\mathcal W_{N^{d_n},d_n}|} &= \frac{H(\Psi^{(n)}_i)}{2 \log |\mathcal W_{N^{d_n},d_n}|}\label{eq:1/2a} \\ &= \frac{1}{2} , \label{eq:1/2b} \ean
where we used that $\Psi_i ^{(n)}$ is uniformly distributed on $ \mathcal W_{N^{d_n},d_n}$, for all $i=1,...,K$.
This allows us to conclude that, for all $n\in \mathbb N$, we have
\ban \nonumber  &\min \lefto \{\frac{H\lefto (\sum_{j=1}^Kh_{ij}\Psi^{(n)}_{j}\right )}{2 \log |\mathcal W_{N^{d_n},d_n}|}, 1\right \}  - \min \lefto \{\frac{H\lefto (\sum_{j\neq i}^K h_{ij}\Psi^{(n)}_{j}\right )}{2 \log |\mathcal W_{N^{d_n},d_n}|}, 1\right \} \\ &\geqslant 1-  \frac{\log |\mathcal W_{N^{d_n+1}, d_n+1}|}{2\log |\mathcal W_{N^{d_n}, d_n}|} ,  \label{eq:summing}  \ean 
 as either the first minimum on the left-hand side (LHS) of \eqref{eq:summing} coincides with the non-trivial term in which case by
\eqref{eq:1/2a} and \eqref{eq:1/2b} the second minimum also coincides with the non-trivial term, and therefore the LHS
of \eqref{eq:summing} equals $1/2\geqslant 1-  \frac{\log |\mathcal W_{N^{d_n+1}, d_n+1}|}{2\log |\mathcal W_{N^{d_n}, d_n}|}$, for all $n \in \mathbb{N}$, thanks to (\ref{conv}), or the first minimum coincides with $1$ in which case we upper-bound the second minimum according to $\frac{H\lefto (\sum_{j\neq i}^K h_{ij}\Psi^{(n)}_{j}\right )}{2 \log |\mathcal W_{N^{d_n},d_n}|}\leqslant  \frac{\log |\mathcal W_{N^{d_n+1}, d_n+1}|}{2\log |\mathcal W_{N^{d_n}, d_n}|} $ again using \eqref{conv}.
The proof is completed by  noting that the RHS  of \eqref{eq:summing} approaches $1/2$ (from below) as $n\to \infty$. We hasten to add that no restrictions had to be imposed on $\mathbf{H}$, so sufficiency in Theorem \ref{thm1} applies to ``all" channel matrices.

\par 
We have established that if $\mathbf{H}$ itself satisfies Condition~($\ast$), then each user achieves $1/2$ DoF. It remains to show that if at least one scaled version of $\mathbf{H}$ satisfies Condition~($\ast$), while $\mathbf{H}$ itself may or may not satisfy Condition~($\ast$), then each user achieves $1/2$ DoF in $\mathbf{H}$. To this end, let $\mathbf{H}$ be such that its scaled version $\mathbf{\tilde{H}}$ satisfies Condition~($\ast$). Further let, for some $\varepsilon \in (0,1/2)$ and $r \in (0,1)$, $\tilde{\Psi}_1, \mydots, \tilde{\Psi}_K$ be the random variables corresponding to the inputs $X_1, \mydots, X_K$ for $\mathbf{\tilde{H}}$ according to Proposition~\ref{prop:apply}. Then, by Proposition~\ref{prop:apply} and what was established above for $\mathbf{H}$, specifically (\ref{eq:summing}), user $i$ in $\mathbf{\tilde{H}}$ achieves 
\begin{equation} \label{etwasImMeer}
  \min \lefto\{\frac{H\lefto (\sum_{j=1}^K \tilde{h}_{ij}\tilde{\Psi}_{j}\right )}{\log (1/  r)}, 1\right  \} - \min \lefto \{\frac{H\lefto (\sum_{j\neq i}^K \tilde{h}_{ij} \tilde{\Psi}_{j}\right )}{\log (1/  r)}, 1\right \} \geqslant 1/2,
\end{equation} 
DoF, for $i=1, \mydots, K$. Next, note that $\mathbf{H}$ can be obtained as a scaled version of $\mathbf{\tilde{H}}$ according to
\begin{equation*}{\mathbf{H}}=\begin{pmatrix} p_1c_1 \tilde{h}_{11}&p_1c_2  \tilde{h}_{12}& \mydots & p_1c_K  \tilde{h}_{1K}\\ p_2c_1  \tilde{h}_{21}&p_2c_2  \tilde{h}_{22}& \mydots & p_2 c_K  \tilde{h}_{2K}\\ \mydots & \vdots & \vdots & \vdots \\ p_K c_1  \tilde{h}_{K1}&p_K c_2  \tilde{h}_{K2}& \mydots & p_K c_K  \tilde{h}_{KK}\end{pmatrix}\hspace{-2.6pt}, \end{equation*}
where $p_i, c_j$, $i, j = 1, \mydots, K$ are nonzero scalars. 
Applying Proposition \ref{prop:apply} with $\Psi_i= \tilde{\Psi}_i/c_i$, for $i=1, \mydots, K$, we conclude that for user $i$ in $\mathbf{H}$,
\begin{align}
\text{DoF}_i \geqslant & \min \lefto\{\frac{H\lefto (\sum_{j=1}^K p_i \tilde{h}_{ij}\tilde{\Psi}_{j}\right )}{\log (1/  r)}, 1\right  \} - \min \lefto \{\frac{H\lefto (\sum_{j\neq i}^K p_i  \tilde{h}_{ij}\tilde{\Psi}_{j}\right )}{\log (1/  r)}, 1\right \}  \\   \label{eqPreserv} =&
\min \lefto\{\frac{H\lefto (\sum_{j=1}^K  \tilde{h}_{ij}\tilde{\Psi}_{j}\right )}{\log (1/  r)}, 1\right  \} - \min \lefto \{\frac{H\lefto (\sum_{j\neq i}^K  \tilde{h}_{ij}\tilde{\Psi}_{j}\right )}{\log (1/  r)}, 1\right \} \geqslant \frac{1}{2}.
\end{align} 
We note that (\ref{eqPreserv}) holds as scaling a random variable by a constant does not change its entropy.

\section{ENTROPY BALANCING} \label{secbalancing}
We next establish ``balancing" results on the entropies of signal and interference contributions that will turn out instrumental in the proof of the necessary part of Theorem \ref{thm1}. To this end, we need the following preparatory result, which is a simple modification of \cite[Th. 3]{Stotz16}. 
\begin{proposition} \label{prop}
Consider the set $\mathcal{S}$ of all IC matrices $\mathbf{H}$ for which the following holds: For every $\varepsilon \in (0, 1/2)$, there exist independent discrete random variables $V_1, \! \mydots, V_K$ of finite entropy such that
\begin{align} \label{prop1}  \text{DoF}_i -\varepsilon  \leqslant  \frac { \Big [{ H \Big ({\sum _{j=1}^{K} h_{ij} V_{j} }\Big )- H \Big ({ \sum _{j\neq i}^{K} h_{ij} V_{j} }\Big )}\Big ]}{ \max _{i=1, \mydots ,K} H   \Big ({\sum _{j=1}^{K} h_{ij} V_{j} }\Big )} \end{align}
for $i=1, \! \mydots, K$. $\mathcal{S}$ is an almost all subset of $\mathbb{R}^{K \times K}$. 
\end{proposition}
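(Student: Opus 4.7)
The plan is to derive \eqref{prop1} by adapting the proof of Theorem~3 in \cite{Stotz16}, which supplies a matching sum-form characterization of the total DoF, and then transferring the sum bound into per-user bounds via non-negativity of the individual gaps.

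First I would apply Theorem~3 of \cite{Stotz16}: for almost all $\mathbf{H}\in\mathbb{R}^{K\times K}$ and every $\varepsilon>0$, there exist independent discrete random variables $V_1,\mydots,V_K$ of finite entropy such that
\[
\text{DoF}(\mathbf{H})-\varepsilon \;\leqslant\; \sum_{i=1}^{K}\frac{H\big(\sum_{j=1}^{K}h_{ij}V_j\big)-H\big(\sum_{j\neq i}h_{ij}V_j\big)}{\max_{\ell}H\big(\sum_{j=1}^{K}h_{\ell j}V_j\big)}.
\]
Denote by $\mathcal{S}'$ the corresponding almost all set of channel matrices; I would take $\mathcal{S}=\mathcal{S}'$. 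The almost-all qualifier originates from Hochman's condition invoked in \cite{Stotz16}, which excludes a Lebesgue measure-zero set of channel matrices.

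Next, keeping these $V_j$'s, I would apply Proposition~\ref{prop:apply} with $\Psi_i=V_i$ and $r\in(0,1)$ chosen so that $\log(1/r)\geqslant \max_{\ell}H(\sum_{j}h_{\ell j}V_j)$. Both minima in \eqref{eq:apply} are then attained by their first arguments, giving
\[
\text{DoF}_i \;\geqslant\; \frac{H\big(\sum_{j=1}^{K}h_{ij}V_j\big)-H\big(\sum_{j\neq i}h_{ij}V_j\big)}{\log(1/r)}, \qquad i=1,\mydots,K.
\]
Letting $\log(1/r)$ tend to $\max_{\ell}H(\sum_{j}h_{\ell j}V_j)$ from above and absorbing the resulting slack into $\varepsilon$ replaces the denominator by $\max_{\ell}H(\sum_{j}h_{\ell j}V_j)$, uniformly in~$i$.

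Combining these bounds with the identity $\text{DoF}(\mathbf{H})=\sum_{i}\text{DoF}_i$ yields
\[
0 \;\leqslant\; \sum_{i=1}^{K}\!\left(\text{DoF}_i-\frac{H\big(\sum_{j}h_{ij}V_j\big)-H\big(\sum_{j\neq i}h_{ij}V_j\big)}{\max_{\ell}H\big(\sum_{j}h_{\ell j}V_j\big)}\right) \;\leqslant\; \varepsilon,
\]
with every summand non-negative by the preceding step. Therefore each summand is itself bounded above by $\varepsilon$, which is precisely \eqref{prop1}. I expect the only non-routine point to be the bookkeeping required to jointly choose $r$ and the $V_j$'s so that the two bounds match on the same denominator, which a small diagonal-type argument should settle. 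Beyond this, the adaptation of Theorem~3 of \cite{Stotz16} is essentially cosmetic, since its proof already produces discrete self-similar inputs and respects the per-user decomposition encoded in \eqref{rateDoFDefinition}.
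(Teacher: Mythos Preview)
Your approach is correct and reaches the same conclusion, but it is organized differently from the paper's proof. The paper works directly inside Wu et al.'s information-dimension framework: it builds the self-similar inputs, invokes \cite[Eqs.~154--157]{Wu15} to obtain the per-user inequality \eqref{prop1} with denominator $\log(1/r)$ in one stroke, and then chooses $r$ so that $\log(1/r)\geqslant \max_i H(\sum_j h_{ij}V_j)$; the almost-all set is identified with the set $\mathcal L$ of \cite[Thm.~4]{Wu15}. You instead take the sum bound from \cite[Thm.~3]{Stotz16} as one black box and the achievability lower bound of Proposition~\ref{prop:apply} as another, and then extract the per-user inequality from the two via the identity $\mathrm{DoF}(\mathbf H)=\sum_i \mathrm{DoF}_i$ and non-negativity of each summand $\mathrm{DoF}_i-R_i$. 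Your argument is a bit more indirect but has the advantage of citing only top-level statements rather than internal equations of \cite{Wu15}; the paper's route is shorter because the per-user information-dimension bound is already available there and no sum-to-individual step is needed. One small point worth making explicit in your write-up: non-negativity of each summand requires $H(\sum_j h_{ij}V_j)\geqslant H(\sum_{j\neq i}h_{ij}V_j)$, which holds by \eqref{wki2} since $h_{ii}\neq 0$, and you can in fact take $\log(1/r)=\max_\ell H(\sum_j h_{\ell j}V_j)$ directly (both minima in \eqref{eq:apply} then coincide with their first arguments), so no limiting or ``diagonal-type'' argument is needed.
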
 \par
\begin{proof}
See Appendix \ref{proofOfProp}.
\end{proof}
We are now ready to state our entropy balancing result. 
\begin{lemma} \label{lem1}
Let $\mathbf{H}$ be a fully-connected IC matrix contained in the a.a. set\footnote{Note that almost all fully-connected IC matrices are contained in $\mathcal{S}$ as fully-connected ICs constitute an a.a. set in $\mathbb{R}^{K \times K}$, and the intersection of two a.a. sets in $\mathbb{R}^{K \times K}$ is again an a.a. set in $\mathbb{R}^{K \times K}$.} $\mathcal{S}$  in Proposition \ref{prop} for DoF$_i=1/2$,  $i=1, \mydots, K$. For $\varepsilon \in (0,1/2)$, denote the corresponding discrete random variables satisfying (\ref{prop1})  by $V_1, \! \mydots,V_K$.  Then, we have
\begin{flalign}  \label{8}
& \frac{H \Big ({\sum _{j \neq i}^{K} h_{ij} V_{j} }\Big )}{  H  (V_i)} =1 + \mathcal{O}(\varepsilon) 
 \text{,  \space for{ } } i=1, \! \mydots, K,  \\[3.5mm] 
\label{9}
 & \frac{H   (V_i  )}{  H  (V_j)} =1 + \mathcal{O}(\varepsilon) \text{, \space for   }  i, j \in \{1,\! \mydots,K \}, i \neq j,\\[3.5mm]
 \label{22}
 & \frac{H \Big ({\sum _{j=1}^{K} h_{ij} V_{j} }\Big )}{H(V_i)}=2 + \mathcal{O}(\varepsilon)  \text{, \space for } i=1, \! \mydots, K. 
\end{flalign} 
\end{lemma}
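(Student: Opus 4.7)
The plan centers on controlling the three quantities $s_i := H\!\bigl(\sum_{j=1}^{K} h_{ij}V_j\bigr)$, $t_i := H\!\bigl(\sum_{j\neq i}^{K} h_{ij}V_j\bigr)$, and $H(V_i)$, all normalized by $M := \max_{k} s_k$. I will establish that $H(V_i)$ and $t_i$ both lie in $[(1/2-\varepsilon)M,\,(1/2+\varepsilon)M]$ while $s_i$ lies in $[(1-2\varepsilon)M,\,M]$, uniformly in $i$. With these sandwich estimates in hand, \eqref{8} and \eqref{9} become ratios of two quantities each within $O(\varepsilon M)$ of $M/2$, while \eqref{22} is the ratio of a quantity within $O(\varepsilon M)$ of $M$ to one within $O(\varepsilon M)$ of $M/2$; in each case the claimed $O(\varepsilon)$ estimate drops out by routine algebra.

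The easy half uses only Proposition~\ref{prop} together with subadditivity of discrete entropy. Proposition~\ref{prop} with $\text{DoF}_i=1/2$ reads $s_i - t_i \geqslant (1/2-\varepsilon)M$ for every $i$; combined with $s_i \leqslant M$ this yields the upper bound $t_i \leqslant (1/2+\varepsilon)M$. Subadditivity applied to the splitting $\sum_{j=1}^K h_{ij}V_j = h_{ii}V_i + \sum_{j\neq i} h_{ij}V_j$ gives $s_i \leqslant H(V_i) + t_i$, hence the lower bound $H(V_i) \geqslant s_i - t_i \geqslant (1/2-\varepsilon)M$.

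The crucial step, and the only place where full-connectedness is invoked, is the upper bound $H(V_i) \leqslant (1/2+\varepsilon)M$. For any $k\neq i$ the hypothesis $h_{ki}\neq 0$ makes $h_{ki}V_i$ an independent summand of the interference seen by receiver $k$. Writing that interference as $h_{ki}V_i + R$ with $R = \sum_{j\neq k,\,j\neq i}h_{kj}V_j$ independent of $V_i$, the elementary identity $H(X+Y)\geqslant H(X)$ for independent discrete $X,Y$ yields $t_k \geqslant H(h_{ki}V_i) = H(V_i)$. Chaining this with the upper bound $t_k \leqslant (1/2+\varepsilon)M$ from the previous paragraph gives $H(V_i)\leqslant (1/2+\varepsilon)M$. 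The same inequality applied inside $t_i$ itself, picking any $j\neq i$, gives $t_i \geqslant H(V_j) \geqslant (1/2-\varepsilon)M$, and then $s_i = (s_i-t_i)+t_i \geqslant (1-2\varepsilon)M$ closes the sandwich for $s_i$.

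No step is technically deep; the point worth flagging is the \emph{cross-user} use of Proposition~\ref{prop}. The upper bound on $H(V_i)$ is not extracted from the inequality at user $i$ alone but by routing through the interference term at some other user $k$, and full-connectedness is precisely what guarantees that $V_i$ appears, with a nonzero coefficient, as an independent summand of every other receiver's interference. This is also the reason the argument is rigid under dropping full-connectedness: a single zero entry $h_{ki}=0$ can sever the route used to bound $H(V_i)$, foreshadowing why the non-fully-connected case in Section~\ref{section::3UserNonFullyConnected} needs a separate combinatorial analysis.
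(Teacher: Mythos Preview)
Your proof is correct and uses essentially the same ingredients as the paper's proof: Proposition~\ref{prop} with $\text{DoF}_i=1/2$, subadditivity $H(X+Y)\leqslant H(X)+H(Y)$, monotonicity $H(X+Y)\geqslant \max\{H(X),H(Y)\}$, and the cross-user routing via full-connectedness to bound $H(V_i)$ from above. Your normalization by $M=\max_k s_k$ to obtain uniform sandwich bounds is slightly more streamlined than the paper's argument, which works with the ratios $t_i/s_i$ and $t_i/H(V_i)$ directly and therefore needs an ordering assumption $H(V_1)\geqslant\cdots\geqslant H(V_K)$ together with a separate case analysis for $i=1$ to close \eqref{8}.
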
 
\begin{proof} 
Starting from (\ref{prop1}) with DoF$_i=1/2$ and rearranging terms, we get \vspace{2pt}
\begin{equation}  \label{12}
2 H \Bigg({\sum _{j \neq i}^{K} h_{ij} V_{j} }\Bigg) \leqslant (1+2 \varepsilon)  H \Bigg ({\sum _{j=1}^{K} h_{ij} V_{j} }\Bigg ),
\end{equation}
for $i=1, \mydots, K$. 
Invoking the following inequality valid for independent discrete random variables $X$, $Y$ \cite[Ex. 2.14]{Cover06}
\begin{equation} \label{wki} H  (X+Y) \leqslant H  (X ) + H  ( Y) \end{equation}
 on the RHS of (\ref{12}) yields
\begin{equation}  \label{13}
   (1-2\varepsilon) { H   \Bigg ({\sum _{j \neq i}^{K} h_{ij} V_{j} }\Bigg )} \leqslant (1+2 \varepsilon)  H    ( V_i ),
\end{equation}
 for $i=1, \mydots, K$. 
Next, we show that 
\begin{equation}  \label{14}
H   \Bigg ({\sum _{j \neq i}^{K} h_{ij} V_{j} }\Bigg) \geqslant \frac{(1-2\varepsilon)}{(1+2\varepsilon)}  H  (V_i),
\end{equation}  
for  $i=1, \mydots, K$. 
To this end, w.l.o.g., we assume that 
\begin{equation} \label{eqAssumeInequality}
H(V_1) \geqslant H(V_2) \geqslant \mydots \geqslant H(V_K).
\end{equation}
 Applying \cite[Ex. 2.14]{Cover06}
\begin{equation} \label{wki2}
H(\alpha X+\beta Y) \geqslant \max{\{H(X),H(Y)}\}
\end{equation} 
 for independent discrete random variables $X$ and $Y$, and arbitrary $\alpha$, $\beta$  $\in \mathbb{R} \setminus \{0\}$, with $X= \sum _{\substack{j \neq 1,i }}^{K} h_{ij} V_{j}$, $Y=V_1$, $\alpha=1$, $\beta= h_{i1}$, for $i=2, \! \mydots, K$, we obtain
\begin{equation}  \label{15}
(1-2\varepsilon) H(V_i) \leqslant (1-2\varepsilon) H(V_1) \leqslant (1+2\varepsilon)  { H   \Bigg ({\sum _{j \neq i}^{K} h_{ij} V_{j} }\Bigg )},
\end{equation}  
where the first inequality follows from (\ref{eqAssumeInequality}). 
This establishes (\ref{14}) for $i=2, \! \mydots, K$. The statement for the case $i=1$ is obtained as follows. First, note that
 \begin{equation} \label{16}
(1-2\varepsilon)  H(V_1) \leqslant  (1-2\varepsilon)  { H   \Bigg ({\sum _{j \neq i}^{K} h_{ij} V_{j} }\Bigg )} \leqslant (1+2 \varepsilon)  H   ( V_i  )\end{equation}  
 for all $i \neq 1$, where the first inequality is again by application of (\ref{wki2}), and the second is by (\ref{13}). Next, by application of (\ref{wki2}), we get 
\begin{equation} \label{eqForIEquals1}
(1+2 \varepsilon) H    ( V_i ) \leqslant (1+2\varepsilon) { H  \Bigg ({\sum _{j \neq 1}^{K} h_{1j} V_{j} }\Bigg )},
\end{equation} 
 for all $i \neq 1$.
Inserting (\ref{eqForIEquals1}) into (\ref{16}) yields 
\begin{equation} \label{eqForIEquals2}
(1-2 \varepsilon) H    ( V_1 ) \leqslant (1+2\varepsilon) { H  \Bigg ({\sum _{j \neq 1}^{K} h_{1j} V_{j} }\Bigg )},
\end{equation} 
which establishes (\ref{14}) for $i=1$. We can now combine (\ref{13}) and (\ref{14}) to get
\begin{equation} \label{17}
 \frac{1-2\varepsilon}{1+2\varepsilon} \leqslant \frac{H   \Big ({\sum _{j \neq i}^{K} h_{ij} V_{j} }\Big )}{H(V_i)} \leqslant \frac{1+2\varepsilon}{1-2\varepsilon},
\end{equation}
for all $i$, which establishes (\ref{8}). \par
To prove (\ref{9}), we again assume, w.l.o.g., that $H(V_1) \geqslant \mydots \geqslant H(V_K)$, and simply note that thanks to (\ref{16}) 
\begin{equation}  \label{18}
\frac{1-2\varepsilon}{1 +2 \varepsilon} \leqslant   \frac{H(V_K)}{H(V_1)} \leqslant \frac{H(V_i)}{H(V_j)} \leqslant \frac{H(V_1)}{H(V_K)} \leqslant \frac{1+2 \varepsilon}{1- 2 \varepsilon},
\end{equation}
for $i,j=1, \! \mydots, K$.  \par 
Finally, to establish (\ref{22}), we start by noting that
\begin{equation} \label{23}
\frac{H  \Big ({\sum _{j=1}^{K} h_{ij} V_{j} }\Big )}{H \mathopen {} ( V_{i} )}  \geqslant  \Big(\frac{1-2\varepsilon}{1+2\varepsilon} \Big)  \frac{H   \Big ({\sum _{j=1}^{K} h_{ij} V_{j} }\Big )}{H    \Big ({\sum _{j \neq i}^{K} h_{ij} V_{j} }\Big )}  \\  \geqslant  \frac{2 (1-2\varepsilon)}{(1+2\varepsilon)^2},
\end{equation}
owing to (\ref{17}) and (\ref{12}).
Using (\ref{wki}) and (\ref{17}), we get
\begin{equation} \label{24}  H   \Bigg ({\sum _{j=1}^{K} h_{ij} V_{j} }\Bigg ) \leqslant \Big( 1+ \frac{1+2\varepsilon}{1-2\varepsilon} \Big) H   ({ V_{i} }).
\end{equation}
Combining (\ref{24}) with (\ref{23}) then establishes (\ref{22}).
\end{proof} \par \vspace{5pt}

\section{PROOF OF NECESSITY IN THEOREM \ref{thm1}} \label{proofthm1}
We assume that the IC has at least three users. (Recall that we excluded the 2-user case, as here each user can achieve exactly $1/2$ DoF by time-sharing regardless of the underlying  $\mathbf{H}$-matrix). We first prove Theorem \ref{thm1} for fully-connected ICs. The proof is effected by contradiction.
Towards this contradiction, we assume that the fully-connected $\mathbf{H}$  is in the almost all set ${\cal S}$ in Proposition \ref{prop} corresponding to DoF$_i = 1/2$, $i = 1,2, \mydots, K$, while at the same time Condition~($ \ast$) is violated for $\mathbf{H}$ and all scaled versions thereof. In particular, Condition~($ \ast$) must also be violated for 
\begin{equation} \label{eq::scaledMatrixHTilde}{\tilde{\mathbf{H}}}=\begin{pmatrix}  h_{11}& h_{12}&  \mydots &h_{1(K-1)}&  1\\ 1&  h_{22}& \mydots & h_{2 (K-1)} &   1 \\ h_{31}&  h_{32}& \mydots &h_{3(K-1)}& 1 \\ \vdots & \vdots & \vdots & \vdots & \vdots \\ h_{(K-1)1}&h_{(K-1)2}& \mydots & h_{(K-1)(K-1)} &1 \\ 1&1& \mydots & 1 & h_{KK} 
\end{pmatrix}\hspace{-2.6pt}, \end{equation}
which can be obtained from $\mathbf{H}$ by scaling according to Definition \ref{thm1} as follows. Denoting the entries of $\mathbf{H}$ as $h'_{ij}$, $i,j=1, \mydots, K$, multiply rows $i \neq 2, i=1, \mydots, K-1$, in $\mathbf{H}$ by $\frac{h'_{2K}}{h'_{21}{h'_{iK}}}$, row $2$ by $\frac{1}{h'_{21}}$, and row $K$ by $\frac{1}{h'_{K1}}$. Then, multiply columns $j=1,\mydots, K-1$ by $\frac{h'_{K1}}{h'_{Kj}}$ and column $K$ by $\frac{h'_{21}}{h'_{2K}}$. 
The reduction to the specific $\mathbf{\tilde{H}}$ in (\ref{eq::scaledMatrixHTilde}) is made for simplicity of exposition. Thanks to Lemma \ref{lemmaAlmostAllSet}, $\mathbf{\tilde{H}}$ is also in the almost all set $\mathcal{S}$ in Proposition \ref{prop} corresponding to DoF$_i = 1/2$, $i = 1,2, \mydots, K$. As by assumption Condition~($\ast$) is violated for $\mathbf{H}$ and all scaled versions thereof, there must be a user $i$ such that, thanks to Remark \ref{remarkEquivalentFormulation}, there exist polynomials $P, Q \in \mathcal{W}^{(\mathbf{\tilde{H}})}$ so that 
\begin{equation} \label{eq::h_iiPQ}
h_{ii}= \frac{P}{Q},
\end{equation}
where $h_{ii}$ is the $i$-th diagonal entry of $\mathbf{\tilde{H}}$.  
As $\mathbf{\tilde{H}}$ is fully connected and contained in the set ${\cal S}$ in Proposition \ref{prop} for DoF$_i=1/2$, $i=1,2, \mydots, K$, it follows from (\ref{22}) that
\begin{equation} \label{eq::Ratio2BetweenSignalNoise}
\frac{H \left(h_{ii} V_i + \displaystyle \sum_{j \neq i} h_{ij} V_j \right)}{H(V_i)}=2 +\mathcal{O}(\varepsilon),
\end{equation}
where $V_1, V_2, \mydots, V_K$ are independent random variables satisfying
(\ref{prop1}) for DoF$_i=1/2$, $i=1, 2, \mydots, K$. We shall show that this leads to a contradiction by proving that (\ref{prop1}) with DoF$_i=1/2$, for $i=1, 2, \mydots, K$, and $V_1, V_2, \mydots, V_K$ satisfying (\ref{eq::Ratio2BetweenSignalNoise}), implies 
\begin{equation} \label{eq::InductionArgumentToProve}
\frac{H \left(h_{ii} V_i + \displaystyle \sum_{j \neq i} h_{ij} V_j \right)}{H(V_i)}=1 +\mathcal{O}(\varepsilon),
\end{equation}
for the index $i$ that satisfies (\ref{eq::h_iiPQ}).

The contradiction will be established through two nested inductive arguments. That is to say, the base case for the main induction argument 
over the maximum number of terms in $P$ and $Q$ in (\ref{eq::h_iiPQ})
will be effected by another induction, namely over the maximum degree of the polynomials, actually monomials in the base case, $P$ and $Q$.

\textit{Base case over the number of terms in $P$ and $Q$}. Let $p$ be the maximum of the number of terms in $P$ and $Q$. We start with the base case $p=1$, i.e.,
both $P$ and $Q$ are monomials. Then, we can express (\ref{eq::h_iiPQ}) as follows
\begin{equation} \label{eq::h_iiMonomialsPQ}
h_{ii}= \frac{a}{b} \frac{\displaystyle \prod_{j,k, j \neq k}^{K(K-1)}h_{jk}^{\alpha_{jk}}}{\displaystyle \prod_{j,k, j \neq k}^{K(K-1)}h_{jk}^{\beta_{jk}}},
\end{equation}
where $a, b \in \mathbb{Z}$, $\alpha_{jk}, \beta_{jk} \in \mathbb{N}$, and $h_{jk}$ are off-diagonal elements of $\mathbf{\tilde{H}}$ with $j,k \in \{1, \mydots, K\}, j \neq k$.

We establish (\ref{eq::InductionArgumentToProve}) by induction over $d= \max \{ \sum_{j,k, j \neq k} \alpha_{jk},\sum_{j,k, j \neq k} \beta_{jk}\}$ with the base case $d=1$. First, the case $d=0$ is dealt with separately by showing that
\begin{equation} \label{eq::Degree0CaseSignalNoise1}
    \frac{H   \left( a V_i + b \sum_{j \neq i} h_{ij} V_j \right)}{H(V_i)}=1+\mathcal{O}(\varepsilon). 
\end{equation}
From (\ref{8}) with $i=K$ and using the specific form of $\mathbf{\tilde{H}}$ in (\ref{eq::scaledMatrixHTilde}), we get 
\begin{equation} \label{eq::Degree0CaseSumOfVi1LineK}
 \frac{H   \left( \sum_{j=1}^{K-1} V_j \right)}{H(V_K)}=1+\mathcal{O}(\varepsilon).
\end{equation}
Again by (\ref{8}), but now with $i=2$, we have
\begin{equation} \label{eq::Degree0CaseSumOfVi1Line2}
 \frac{H   \left( V_1+ \sum_{j=3}^{K} h_{2j}V_j \right)}{H(V_2)}=1+\mathcal{O}(\varepsilon),
\end{equation}
where we used that $h_{21}=1$. Next, note that 
\begin{equation} \label{eq::Degree0CaseSumOfVi1Line2Preparation}
 1+\mathcal{O}(\varepsilon)= \frac{H   \left( V_1+ \sum_{j=3}^{K} h_{2j}V_j \right)}{H(V_2)} \geqslant \frac{H   \left( V_1+ V_K \right)}{H(V_2)} \geqslant \frac{H   \left( V_1 \right)}{H(V_2)}=1+\mathcal{O}(\varepsilon),
\end{equation}
where both inequalities follow from (\ref{wki2}), and the last equation is by (\ref{9}). This yields 
\begin{equation} \label{eq::Degree0CaseSumOfVi1Line2Cleaner}
 \frac{H   \left( V_1 +V_K \right)}{H(V_2)}= 1+\mathcal{O}(\varepsilon).
\end{equation}
We can now replace the denominators of (\ref{eq::Degree0CaseSumOfVi1LineK}) and (\ref{eq::Degree0CaseSumOfVi1Line2Cleaner}) by $H(V_1)$, by applying (\ref{9}) with $i=1$ and $j=K$, and $i=1$ and $j=2$, respectively. Then, Lemma \ref{lemtao} with $X=V_1$, $Y_1=\sum_{j \neq 1}^{K-1} V_j$, and $Y_2=V_K$ yields
\begin{equation} \label{eq::Degree0CaseSumOfVi1Conclusion}
\frac{ H \left( \sum_{j=1}^K V_j  \right)}{H(V_1)}= 1+\mathcal{O}(\varepsilon).
\end{equation}
Next, thanks to (\ref{eq::Degree0CaseSumOfVi1Conclusion}), (\ref{9}), and (\ref{wki2}), we have 
\begin{equation} \label{eq::Degree0CaseSumOfVi1Helper}
 1+\mathcal{O}(\varepsilon)= \frac{H \left( \sum_{j= 1}^K V_j  \right)}{H(V_i)}  \geqslant \frac{H \left( \sum_{j \neq i} V_j  \right)}{H(V_i)} \geqslant  1 + \mathcal{O}(\varepsilon),
\end{equation}
for all $i=1, \mydots, K$.
Applying \cite[Th. 14]{Wu15} (see Appendix \ref{appEntropyDiff}) with $p=a, q=b, X=V_i$, and $Y=\sum_{j \neq i} V_j$, and dividing the result thereof by $H(V_i)$ yields
 \begin{align}\label{eq::Degree0CasePunchLineInequality} 
\frac{H \left(aV_i+b\sum_{j \neq i} V_j \right)}{H(V_i)}  -\frac{ H \left(\sum_{j=1}^K V_j \right)}{H(V_i)} \leqslant   \tau_{a,b} \left (\frac{2 H \left(\sum_{j=1}^K V_j \right)-H(V_i)-H \left(\sum_{j \neq i} V_j \right)}{H(V_i)} \right ),
\end{align}
where $\tau_{a,b}= 7 \lfloor \log |a| \rfloor + 7 \lfloor \log |b| \rfloor +2$. Thanks to (\ref{eq::Degree0CaseSumOfVi1Helper}), the RHS of (\ref{eq::Degree0CasePunchLineInequality}) equals $\mathcal{O}(\varepsilon)$. We therefore have 
\begin{equation} 
1 \leqslant \frac{H \left(aV_i+b\sum_{j \neq i} V_j \right)}{H(V_i)} \leqslant 1+ \mathcal{O}(\varepsilon),
\end{equation}
where the first inequality follows from (\ref{wki2}). In summary, we get
\begin{equation} \label{eq::Degree0CasePunchLineFinalForK}
\frac{H \left(aV_i+b\sum_{j \neq i} V_j \right)}{H(V_i)} = 1+ \mathcal{O}(\varepsilon),
\end{equation}
which, upon noting that $h_{K1}=h_{K2}= \mydots = h_{K(K-1)}=1$, establishes (\ref{eq::Degree0CaseSignalNoise1}) for $i=K$. For $i \neq K$,  we apply (\ref{8}) to obtain
\begin{equation} \label{eq::Degree0CaseIfNotUserK}
\frac{H \left(b V_K + b \sum_{j \neq i}^{K-1} h_{ij}V_j \right)}{H(V_i)} = 1+ \mathcal{O}(\varepsilon).
\end{equation}
Furthermore, thanks to (\ref{eq::Degree0CasePunchLineFinalForK}) and $(\ref{wki2})$, we have
\begin{equation}  \label{eq::Degree0CaseIfNotUserKNo2}
\frac{H \left(aV_i+bV_K \right)}{H(V_i)} = 1+ \mathcal{O}(\varepsilon).
\end{equation}
Next, replacing $H(V_i)$ by $H(V_K)$ in the denominators of (\ref{eq::Degree0CaseIfNotUserK}) and (\ref{eq::Degree0CaseIfNotUserKNo2}) leaves, thanks to (\ref{9}), the corresponding right hand sides unchanged. Now, applying Lemma \ref{lemtao} with $X=bV_K$, $Y_1=aV_i$, and $Y_2= b \sum_{j \neq i,K}^{K} h_{ij}V_j$, yields
\begin{equation}
\frac{H \left(aV_i+b\sum_{j \neq i}^{K} h_{ij} V_j \right)}{H(V_K)} = 1+ \mathcal{O}(\varepsilon),
\end{equation}
which, after replacing $H(V_K)$ in the denominator by $H(V_i)$, again thanks to (\ref{9}), establishes (\ref{eq::Degree0CaseSignalNoise1}) for $i=1, \mydots,K-1$, as desired.
\par
We now proceed with the base case of induction over the maximum degree of the monomials $P$ and $Q$, namely $d=1$. Specifically, we need to show that for the specific $i$ in (\ref{eq::h_iiPQ}),   
\begin{equation} \label{eq::Degree1CaseSignalNoise1}
    \frac{H   \left( \frac{a}{b} \frac{h_{mn}}{h_{p\ell}} V_i +  \sum_{j \neq i}^{K} h_{ij} V_j \right)}{H(V_i)}=1+\mathcal{O}(\varepsilon),
\end{equation}
for all $a,b \in \mathbb{Z} \setminus \{0 \}$, and $h_{mn}, h_{p \ell}$ off-diagonal elements of $\mathbf{\tilde{H}}$, i.e., $m,n,p,\ell \in \{1,\mydots, K\}$, with $m \neq n$, $p \neq \ell$. We first consider the case $i=\ell=n=K$. As here $h_{mK}=h_{pK}=h_{iK}=1$, for $i=1, \mydots,K-1$,  (\ref{eq::Degree1CaseSignalNoise1}) becomes
\begin{equation} \label{eq::Degree1CaseSignalNoise1,i=l=n=K=1}
    \frac{H   \left( a  V_K+  b \sum_{j =1}^{K-1}  V_j \right)}{H(V_K)}=1+\mathcal{O}(\varepsilon), 
\end{equation}
and we are done thanks to (\ref{eq::Degree0CasePunchLineFinalForK}). Next, we consider $i=\ell=K$, $n \neq K$. In this case, (\ref{eq::Degree1CaseSignalNoise1}) reduces to
\begin{equation} \label{eq::Degree1CaseSignalNoise1,i=l=K,nneqK}
    \frac{H   \left( a h_{mn} V_K+  b \sum_{j =1}^{K-1}  V_j \right)}{H(V_K)}=1+\mathcal{O}(\varepsilon).
\end{equation}
We first note that, thanks to (\ref{8}) for $i=m$ and (\ref{wki2}), we have 
\begin{equation} \label{eq::Degree1,i=l=K,nneqK, ineq1}
    \frac{H   \left(a h_{mn} V_n+ a V_K \right)}{H(V_K)}=1+\mathcal{O}(\varepsilon).
\end{equation}
 Using (\ref{eq::Degree0CaseSumOfVi1Conclusion}), (\ref{wki2}), and (\ref{9}), we obtain 
 \begin{equation} \label{eq::Degree1,i=l=K,nneqK, ineq2}
  \frac{H   \left( a h_{mn} V_K+  a h_{mn} V_n \right)}{H(V_K)}=1+\mathcal{O}(\varepsilon).
\end{equation}
Replacing the denominators in (\ref{eq::Degree1,i=l=K,nneqK, ineq1}) and (\ref{eq::Degree1,i=l=K,nneqK, ineq2}) with $H(V_n)$ by using (\ref{9}), and applying Lemma \ref{lemtao} with $X=a h_{mn} V_n$, $Y_1=a\tilde{V}_K$, $Y_2= a h_{mn} V_K$, where $\tilde{V}_K$ is an independent copy of $V_K$, results in
\begin{equation} \label{eq::Degree1,i=l=K,nneqK, tao1}
   \frac{H   \left( a h_{mn} V_n+ a \tilde{V}_K+ a h_{mn} V_K   \right)}{H(V_n)}=1+\mathcal{O}(\varepsilon) .  
\end{equation}
Again using  (\ref{9}) to replace the denominator in (\ref{eq::Degree1,i=l=K,nneqK, tao1}) with $H(V_K)$, and applying (\ref{wki2}) yields
\begin{equation} \label{eq::Degree1,i=l=K,nneqK, tao1Clean}
   \frac{H   \left( a \tilde{V}_K+ a h_{mn} V_K   \right)}{H(V_K)}=1+\mathcal{O}(\varepsilon) .  
\end{equation}
We now combine (\ref{eq::Degree1,i=l=K,nneqK, tao1}) with (\ref{eq::Degree0CasePunchLineFinalForK}) (for $i=K$), to apply Lemma~\ref{lemtao} with $X= a \tilde{V}_K$, $Y_1=ah_{mn}V_K$, and $Y_2=b \sum_{j=1}^{K-1} V_j$, resulting in
\begin{equation} \label{eq::Degree1,i=l=K,nneqK, tao2}
   \frac{H   \left( a \tilde{V}_K+ a h_{mn} V_K + b \sum_{j=1}^{K-1} V_j    \right)}{H(V_K)}=1+\mathcal{O}(\varepsilon) ,  
\end{equation}
which, thanks to (\ref{wki2}), yields 
\begin{equation} \label{eq::Degree1,i=l=K,nneqK, tao2Clean}
   \frac{H   \left(a h_{mn} V_K + b \sum_{j=1}^{K-1} V_j    \right)}{H(V_K)}=1+\mathcal{O}(\varepsilon) ,  
\end{equation}
as desired. \par
Next, we consider $i=n=K$, $\ell \neq K$. In this case (\ref{eq::Degree1CaseSignalNoise1}) becomes
\begin{equation}
\label{eq::Degree1, i=n=K. lneqK}
    \frac{H   \left( a  V_K +  b h_{p\ell} \sum_{j =1}^{K-1}  V_j \right)}{H(V_i)}=1+\mathcal{O}(\varepsilon). 
\end{equation}
We first note that, thanks to (\ref{8}) for $i=p$ and (\ref{wki2}) for $i=p$, $j=\ell$, we have 
\begin{equation} \label{eq::Degree1,i=n=K. lneqK, ineq1}
    \frac{H   \left(a h_{p\ell} V_\ell+ a V_K \right)}{H(V_\ell)}=1+\mathcal{O}(\varepsilon).
\end{equation}
 Using (\ref{eq::Degree0CaseSumOfVi1LineK}) and (\ref{wki2}), we obtain 
 \begin{equation} \label{eq::Degree1,i=n=K. lneqK, ineq2}
  \frac{H   \left( a h_{p\ell} V_K+  a h_{p\ell} V_\ell \right)}{H(V_K)}=1+\mathcal{O}(\varepsilon).
\end{equation}
Replacing the denominator in  (\ref{eq::Degree1,i=n=K. lneqK, ineq2}) with $H(V_\ell)$ by using (\ref{9}), we apply Lemma \ref{lemtao} with $X=a h_{p\ell} V_\ell$, $Y_1=a V_K$, $Y_2= a h_{p\ell} \tilde{V} _K$, where $\tilde{V}_K$ is an independent copy of $V_K$, to obtain
\begin{equation} \label{eq::Degree1,i=n=K. lneqK, tao1}
   \frac{H   \left( a h_{p\ell} V_\ell+ a V_K+ a h_{p\ell} \tilde{V}_K   \right)}{H(V_\ell)}=1+\mathcal{O}(\varepsilon) .  
\end{equation}
We again use  (\ref{9}) to replace the denominator in (\ref{eq::Degree1,i=n=K. lneqK, tao1}) with $H(V_K)$ and get, thanks to (\ref{wki2}),
\begin{equation} \label{eq::Degree1,i=n=K. lneqK, tao1Clean}
   \frac{H   \left( a V_K+ a h_{p\ell} \tilde{V}_K   \right)}{H(V_K)}=1+\mathcal{O}(\varepsilon) .  
\end{equation}
We now combine (\ref{eq::Degree1,i=n=K. lneqK, tao1Clean}) with (\ref{eq::Degree0CasePunchLineFinalForK}) for $i=K$, to apply Lemma~\ref{lemtao} with $X= a h_{p\ell} \tilde{V}_K$, $Y_1=a V_K$, $Y_2=b h_{p\ell} \sum_{j=1}^{K-1} V_j$, and obtain
\begin{equation} \label{eq::Degree1,i=n=K. lneqK, tao2}
   \frac{H   \left( a h_{p\ell} \tilde{V}_K+ a V_K + b h_{p\ell} \sum_{j=1}^{K-1} V_j    \right)}{H(V_K)}=1+\mathcal{O}(\varepsilon) ,  
\end{equation}
which, thanks to (\ref{wki2}), yields 
\begin{equation} \label{eq::Degree1,i=n=K. lneqK, tao2Clean}
   \frac{H   \left(a  V_K + b h_{p\ell} \sum_{j=1}^{K-1} V_j    \right)}{H(V_K)}=1+\mathcal{O}(\varepsilon) ,  
\end{equation}
as desired. \par

We next consider $i=K, n \neq K, \ell \neq K$. Combining (\ref{eq::Degree1,i=l=K,nneqK, tao1Clean}) and (\ref{eq::Degree1,i=n=K. lneqK, tao2Clean}), we apply Lemma \ref{lemtao} with $X= a \tilde{V}_K$, $Y_1=a h_{mn} V_K$, and $Y_2= b h_{p\ell} \sum_{j=1}^{K-1} V_j$, where $\tilde{V}_K$ is an independent copy of $V_K$, to get
\begin{equation} \label{eq::Degree1,i=K, lneqK, nneqK, tao}
   \frac{H   \left(a  \tilde{V}_K +ah_{mn} V_K+ b h_{p\ell} \sum_{j=1}^{K-1} V_j    \right)}{H(V_K)}=1+\mathcal{O}(\varepsilon) ,  
\end{equation}    
which, thanks to (\ref{wki2}), yields 
\begin{equation} \label{eq::Degree1,i=K, lneqK, nneqK, taoClean}
   \frac{H   \left(a h_{mn}  V_K + b h_{p\ell} \sum_{j=1}^{K-1} V_j    \right)}{H(V_K)}=1+\mathcal{O}(\varepsilon) ,  
\end{equation}
as desired. \par
We finally consider $i \neq K$. Using (\ref{eq::Degree0CaseSumOfVi1Conclusion}), (\ref{9}) with $i=1$, $j=K$, and (\ref{wki2}), we get
\begin{equation}\label{eq::Degree1,nothingEqK, 1}
    \frac{H   \left( h_{mn}  V_i +  h_{mn} V_K    \right)}{H(V_K)}=1+\mathcal{O}(\varepsilon).
\end{equation}
Applying Lemma \ref{lemtao} with (\ref{eq::Degree1,i=l=K,nneqK, tao1Clean}) and (\ref{eq::Degree1,nothingEqK, 1}),  with $X= h_{mn} V_K $, $Y_1= \tilde{V}_K $, and $Y_2=  h_{mn} V_i$, yields
\begin{equation}\label{eq::Degree1,nothingEqK, tao1}
    \frac{H   \left( h_{mn}  V_i +  h_{mn} V_K +  \tilde{V}_K   \right)}{H(V_K)}=1+\mathcal{O}(\varepsilon),
\end{equation}
which, thanks to (\ref{wki2}), results in 
\begin{equation}\label{eq::Degree1,nothingEqK, tao1Clean}
    \frac{H   \left( h_{mn}  V_i +  \tilde{V}_K  \right)}{H(V_K)}=1+\mathcal{O}(\varepsilon).
\end{equation}
Again applying Lemma \ref{lemtao}, but now with  (\ref{eq::Degree1,i=n=K. lneqK, tao1Clean}) and (\ref{eq::Degree1,nothingEqK, tao1Clean}), with $X=\tilde{V}_K$, $Y_1=h_{p\ell} V_K$, and $Y_2=h_{mn}V_i$, yields
\begin{equation}\label{eq::Degree1,nothingEqK, tao2}
    \frac{H   \left( h_{p\ell} V_K + \tilde{V}_K+ h_{mn}V_i    \right)}{H(V_K)}=1+\mathcal{O}(\varepsilon),
\end{equation}
which, thanks to (\ref{wki2}), results in
\begin{equation}\label{eq::Degree1,nothingEqK, tao2Clean}
    \frac{H   \left( h_{p\ell} V_K + h_{mn}V_i    \right)}{H(V_K)}=1+\mathcal{O}(\varepsilon).
\end{equation}
Next, noting that $h_{iK}=1$, application of (\ref{8}) to user $i$ and (\ref{9}) to users $i$ and $K$ yields
\begin{equation}\label{eq::Degree1,nothingEqK, 3}
    \frac{H   \left( h_{p\ell} V_K + h_{p\ell} \sum_{j \neq i, K} h_{ij}V_j    \right)}{H(V_K)}=1+\mathcal{O}(\varepsilon).
\end{equation}
We now combine (\ref{eq::Degree1,nothingEqK, tao2Clean}) and (\ref{eq::Degree1,nothingEqK, 3}), and employ Lemma \ref{lemtao} with $X=h_{p\ell} V_K$, $Y_1=h_{mn}V_i$, $Y_2=h_{p\ell}\sum_{j \neq i,K} h_{ij}V_j $, to get 
\begin{equation}\label{eq::Degree1,nothingEqK, tao3}
    \frac{H   \left(  h_{mn}V_i+h_{p\ell}\sum_{j \neq i}^{K} h_{ij}V_j    \right)}{H(V_K)}=1+\mathcal{O}(\varepsilon).
\end{equation}
Applying \cite[Th. 14]{Wu15} (see Appendix \ref{appEntropyDiff}) with $p=a, q=b, X=h_{mn}V_i$, and $Y=h_{p\ell}\sum_{j \neq i}^{K} h_{ij} V_j$, and dividing the result thereof by $H(V_K)$ yields
 \begin{align}\label{eq::Degree1, notehingEqK, 4Thm14} 
\nonumber \frac{H \left(a h_{mn}V_i+b h_{p\ell}\sum_{j \neq i}^{K} h_{ij}V_j \right)}{H(V_K)}  -\frac{ H \left(h_{mn}V_i+h_{p\ell}\sum_{j \neq i}^{K} h_{ij}V_j \right)}{H(V_K)} \\ \leqslant   \tau_{a,b} \left (\frac{2 H \left(h_{mn}V_i+h_{p\ell}\sum_{j \neq i}^{K} h_{ij}V_j \right)-H(V_i)-H \left(\sum_{j \neq i}^{K} h_{ij} V_j \right)}{H(V_K)} \right ),
\end{align}
where $\tau_{a,b}= 7 \lfloor \log |a| \rfloor + 7 \lfloor \log |b| \rfloor +2$. The first, second, and third terms on the RHS of (\ref{eq::Degree1, notehingEqK, 4Thm14}) are $2+ \mathcal{O}(\varepsilon)$, $1+ \mathcal{O}(\varepsilon)$, and $1+ \mathcal{O}(\varepsilon)$, respectively, thanks to (\ref{eq::Degree1,nothingEqK, tao3}),  (\ref{9}), and (\ref{eq::Degree1,nothingEqK, 3}), respectively. Hence, the RHS of (\ref{eq::Degree1, notehingEqK, 4Thm14}) equals $\mathcal{O}(\varepsilon)$, resulting in
\begin{equation} \label{eq::Degree1, notehingEqK, 4Thm14Evolves} 
1 \leqslant \frac{H \left(a h_{mn}V_i+b h_{p\ell}\sum_{j \neq i}^{K} h_{ij}V_j \right)}{H(V_K)}  \leqslant 1+ \mathcal{O} (\varepsilon), 
\end{equation}
where the first inequality is due to (\ref{wki2}). Finally, using (\ref{9}) to replace the denominator of (\ref{eq::Degree1, notehingEqK, 4Thm14Evolves}) with $H(V_i)$, we obtain
\begin{equation} \label{eq::Degree1, notehingEqK, Final} 
\frac{H \left(a h_{mn}V_i+b h_{p\ell}\sum_{j \neq i}^{K} h_{ij}V_j \right)}{H(V_i)} = 1+ \mathcal{O} (\varepsilon), 
\end{equation}
as desired. \par
We proceed to the induction step over $d$ for (\ref{eq::InductionArgumentToProve}). To this end, we assume that (\ref{eq::InductionArgumentToProve}) holds for $d=m-1$, with $m \geqslant 2$, and we show that this implies (\ref{eq::InductionArgumentToProve}) for $d=m$. Consider $P:=a \prod_{j,k, j \neq k}^{K(K-1)}h_{jk}^{\alpha_{jk}}$ and $Q:=b \prod_{j,k, j \neq k}^{K(K-1)}h_{jk}^{\beta_{jk}}$ such that the maximum of the degrees of $P$ and $Q$ is $d$. Next, note that we can factorize $P$ and $Q$ such that $P=P_1 P_2$ and $Q=Q_1 Q_2$, where $P_1, P_2, Q_1,$ and $Q_2$ are all of degree strictly smaller than $d$. We now want to show that
\begin{equation*}
\frac{H \left( \frac{P_1 P_2}{Q_1 Q_2} V_i + \displaystyle \sum_{j \neq i} h_{ij} V_j \right)}{H(V_i)} = 1+ \mathcal{O} (\varepsilon).
\end{equation*}
To this end, first note that thanks to the induction assumption, we have 
\begin{equation} \label{eq::degreeKeq1}
 \frac{H \left( \frac{P_1 }{Q_1} V_i + \displaystyle\sum_{j \neq i} h_{ij} V_j \right)}{H(V_i)} = 1+ \mathcal{O} (\varepsilon),   
\end{equation}
and
\begin{equation} \label{eq::degreeKeq2}
 \frac{H \left( \frac{P_2}{Q_2} V_i + \displaystyle\sum_{j \neq i} h_{ij} V_j \right)}{H(V_i)} = 1+ \mathcal{O} (\varepsilon).   
\end{equation}
Next, we use (\ref{eq::Degree0CaseSignalNoise1}) with $a=b=1$, as well as  (\ref{eq::degreeKeq1}) and (\ref{eq::degreeKeq2}), upon replacing their denominators with $H\left(\sum_{j \neq i} h_{ij} V_j\right)$, which is possible thanks to (\ref{8}), and we apply Lemma \ref{lemtao} with $X=\sum_{j \neq i} h_{ij} V_j$, $Y_1=\frac{P_1 }{Q_1} V_i$, $Y_2=\frac{P_2 }{Q_2} V_i$, $Y_3= V_i$, to get
\begin{equation}  \label{eq::degreeKTao}
 \frac{H \left(V_i+ \frac{P_1}{Q_1} V_i + \frac{P_2}{Q_2} V_i + \displaystyle\sum_{j \neq i} h_{ij} V_j \right)}{H\left(\displaystyle\sum_{j \neq i} h_{ij} V_j \right)} = 1+ \mathcal{O} (\varepsilon).
\end{equation}
Thanks to (\ref{wki2}) this results in 
\begin{equation} \label{eq::degreeKTaoClean}
 \frac{H \left(\frac{P_2}{Q_2}V_i+ \frac{P_1}{Q_1} \frac{P_2}{Q_2} V_i   \right)}{H\left(\displaystyle \sum_{j \neq i} h_{ij} V_j \right)} = 1+ \mathcal{O} (\varepsilon).     
\end{equation}
Finally, we use (\ref{eq::degreeKeq2}) and (\ref{eq::degreeKTaoClean}) upon replacing its denominator with $H(V_i)$, which is possible thanks to (\ref{8}), and we apply Lemma \ref{lemtao} with $X=\frac{P_2}{Q_2}V_i$, $Y_1= \sum_{j \neq i} h_{ij} V_j$, $Y_2= \frac{P_1 P_2}{Q_1 Q_2}V_i$, to conclude that
\begin{equation} \label{eq::degreeKTao2}
 \frac{H \left(\frac{P_2}{Q_2}V_i+ \frac{P_1}{Q_1} \frac{P_2}{Q_2} V_i  +\displaystyle \sum_{j \neq i} h_{ij} V_j  \right)}{H\left( V_i \right)} = 1+ \mathcal{O} (\varepsilon).
\end{equation}
Again, thanks to (\ref{wki2}), this yields 
\begin{equation} \label{eq::degreeKTao2Final}
 \frac{H \left( \frac{P_1}{Q_1} \frac{P_2}{Q_2} V_i  +\displaystyle \sum_{j \neq i} h_{ij} V_j  \right)}{H\left( V_i \right)} = 1+ \mathcal{O} (\varepsilon),    
\end{equation}
as desired and thereby concludes the induction step with respect to the maximum degree, which, in turn, establishes the base case for the induction over the number of terms. \par
We proceed to the induction over the number of terms by assuming that  (\ref{eq::InductionArgumentToProve}) holds for $p=n$, $ n \geqslant 1$. Recall that $p$ is the maximum number of terms in the polynomials $P:= \sum_{s=1}^{\ell} P_s$ and $Q := \sum_{s=1}^{m} Q_s$, where $P_s=a_s \prod_{j,k, j \neq k}^{K(K-1)}h_{jk}^{\alpha^{(s)}_{jk}}$ and $Q_s=b_s \prod_{j,k, j \neq k}^{K(K-1)}h_{jk}^{\beta^{(s)}_{jk}}$, with $a_s,b_s \in \mathbb{Z}$ and $\alpha^{(s)}_{jk},\beta^{(s)}_{jk} \in \mathbb{N}$.
We need to show that this implies 
\begin{equation} \label{eq::p==n,Induction}
 \frac{H \left( \frac{ \left(P_1 + P_2 + \, \mydots \, +  P_\ell \right)}{\left(Q_1 + Q_2 + \, \mydots \, +Q_m \right)}  V_i  +\displaystyle \sum_{j \neq i} h_{ij} V_j  \right)}{H\left( V_i \right)} = 1+ \mathcal{O} (\varepsilon),   
\end{equation}
for $\max \{\ell, m \}= n+1$. 
First, we consider the case $\ell = n+1 > m$. Then, thanks to 
the induction assumption, it holds that
\begin{equation} \label{eq::p==n,Induction2}
 \frac{H \left(  P_1 V_i  +\displaystyle (Q_1 + \mydots + Q_m) \sum_{j \neq i} h_{ij} V_j  \right)}{H\left( V_i \right)} = 1+ \mathcal{O} (\varepsilon),
\end{equation}
and 
\begin{equation} \label{eq::p==n,Induction3}
 \frac{H \left( \left (P_2 + \mydots + P_\ell \right) V_i  +\displaystyle (Q_1 + \mydots + Q_m) \sum_{j \neq i} h_{ij} V_j  \right)}{H\left( V_i \right)} = 1+ \mathcal{O} (\varepsilon).
\end{equation}

Using (\ref{eq::p==n,Induction2}) and (\ref{eq::p==n,Induction3}), we apply Lemma \ref{lemtao} with $X= (Q_1 + \mydots + Q_m) \left ( \sum_{j \neq i} h_{ij} V_j  \right)$, $Y_1= (P_2 + \mydots + P_\ell) V_i$, and $Y_2=P_1 V_i$, to get
\begin{equation} \label{eq::p==n,InductionProved1}
 \frac{H \left( \left(P_1 + P_2 + \mydots+ P_{\ell} \right) V_i  +\displaystyle (Q_1 + \mydots + Q_m) \sum_{j \neq i} h_{ij} V_j  \right)}{H\left( \sum_{j \neq i} h_{ij} V_j \right)} = 1+ \mathcal{O} (\varepsilon),
\end{equation}
which, upon replacing the denominator with $H(V_i)$, which is possible thanks to (\ref{8}), establishes (\ref{eq::p==n,Induction}) as desired. \par
We next consider the case $m=n+1 >\ell$. In this case, similarly, we apply Lemma \ref{lemtao} with $X=P_1 V_i$, $Y_1= Q_1 \sum_{j \neq i} h_{ij} V_j$, and $Y_2 =(Q_2 + \mydots + Q_m ) \sum_{j \neq i} h_{ij} V_j$, to obtain  (\ref{eq::p==n,Induction}) as desired. \par

We are left with the case $m=\ell=n+1$. First, note that we have already shown (\ref{eq::p==n,Induction}) for $\ell=n+1 >m$, and hence the following relations hold: 
\begin{equation} \label{eq::p==n,Induction4}
 \frac{H \left(  (P_1+ \mydots + P_\ell) V_i  +\displaystyle (Q_1 + \mydots + Q_{m-1}) \sum_{j \neq i} h_{ij} V_j  \right)}{H\left( V_i \right)} = 1+ \mathcal{O} (\varepsilon).
\end{equation}
and
 \begin{equation} \label{eq::p==n,Induction5}
 \frac{H \left(  (P_1 + \mydots +P_\ell) V_i  +\displaystyle Q_m \sum_{j \neq i} h_{ij} V_j  \right)}{H\left( V_i \right)} = 1+ \mathcal{O} (\varepsilon).
\end{equation}
We now combine (\ref{eq::p==n,Induction4}) and (\ref{eq::p==n,Induction5}) and apply Lemma \ref{lemtao} with $X= (P_1 + \mydots +P_{\ell}) V_i $, $Y_1=  (Q_1 + \mydots + Q_{m-1}) \sum_{j \neq i} h_{ij} V_j$, and $Y_2= Q_m \sum_{j \neq i} h_{ij} V_j$ to obtain  (\ref{eq::p==n,Induction}) as desired. \par

It remains to prove necessity for the non-fully-connected case. We start with the following technical result. 

\begin{lemma} \label{lemma::setMembershipNFC}
 Let $\mathcal{T}$ be the set of $K$-user non-fully-connected IC matrices of an arbitrary, but fixed topology. Let $\mathcal{T'}$ be the subset of $\mathcal{T}$ obtained by restricting the set of fully-connected matrices that are in $\mathcal{S}$ according to Proposition \ref{prop} for $\text{DoF}_i=1/2$, $i=1, \mydots, K$, to the complement of the zero-set of $\mathcal{T}$. Then, the set  $\mathcal{T'}$ is an a.a. subset of $\mathcal{T}$.

 \begin{proof}
  Since the set of fully-connected matrices $\mathcal{F}$ and the set $\mathcal{S}$ in Proposition \ref{prop} for $\text{DoF}_i=1/2$, $i=1, \mydots, K$, are  almost all subsets of $\mathbb{R}^{K \times K}$, the set $\mathcal{F}_s := \mathcal{F} \cap \mathcal{S}$, as the intersection of two almost all sets, is also an almost all subset of $\mathbb{R}^{K \times K}$. Next, note that $\mathcal{T}$ is obtained by restricting $\mathcal{F}$ to the complement of the zero-set of $\mathcal{T}$. Now, let us assume, by way of contradiction, that $\mathcal{T} \setminus \mathcal{T'}$ has positive measure. As 
  $\mathcal{T} \setminus \mathcal{T'}$ is the restriction of $\mathcal{F} \setminus \mathcal{F}_s$ to the complement of the zero-set of $\mathcal{T}$, if $\mathcal{T} \setminus \mathcal{T'}$ were of positive measure so would $\mathcal{F} \setminus \mathcal{F}_s$ have to be, which constitutes a contradiction and thereby finishes the proof. 
 \end{proof}
\end{lemma}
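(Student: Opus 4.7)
The plan is to invoke Fubini's theorem directly on the product decomposition of $\mathbb{R}^{K\times K}$ induced by the fixed topology. First, denote by $Z$ the zero-set of the topology and by $\bar Z$ its complement, and identify $\mathbb{R}^{K\times K}$ with $\mathbb{R}^{|Z|}\times\mathbb{R}^{|\bar Z|}$ through the obvious splitting of the matrix entries. Under this identification, $\mathcal{T}$ corresponds to the affine slice $\{0\}\times\mathbb{R}^{|\bar Z|}$, equipped with $|\bar Z|$-dimensional Lebesgue measure on its free coordinates, which is the natural meaning of ``almost all'' in $\mathcal{T}$.

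Next, since the set $\mathcal{F}$ of fully-connected matrices (the complement of a finite union of coordinate hyperplanes) and the set $\mathcal{S}$ from Proposition~\ref{prop} are each almost-all subsets of $\mathbb{R}^{K\times K}$, so is $\mathcal{F}\cap\mathcal{S}$, and consequently the bad set $\mathcal{B}:=\mathbb{R}^{K\times K}\setminus(\mathcal{F}\cap\mathcal{S})$ has $K^2$-dimensional Lebesgue measure zero. By Fubini applied to the product $\mathbb{R}^{|Z|}\times\mathbb{R}^{|\bar Z|}$, for almost every $\bar z\in\mathbb{R}^{|\bar Z|}$ the fiber $\mathcal{B}_{\bar z}:=\{z\in\mathbb{R}^{|Z|}:(z,\bar z)\in\mathcal{B}\}$ has $|Z|$-dimensional Lebesgue measure zero.

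For each such good $\bar z$, the set $(\mathbb{R}\setminus\{0\})^{|Z|}\setminus\mathcal{B}_{\bar z}$ has positive measure, and is in particular non-empty. Picking any element $z^{*}$ of it produces a fully-connected matrix $\tilde{\mathbf H}=(z^{*},\bar z)$ lying in $\mathcal{F}\cap\mathcal{S}$; collapsing the $Z$-entries back to zero then yields $\mathbf H=(0,\bar z)\in\mathcal{T}$ which, by the very definition of $\mathcal{T}'$ as the image of $\mathcal{F}\cap\mathcal{S}$ under this collapse, witnesses $\mathbf H\in\mathcal{T}'$. Hence $\mathcal{T}\setminus\mathcal{T}'$ is contained in the exceptional set of Fubini and therefore has $|\bar Z|$-dimensional Lebesgue measure zero, which is precisely the almost-all conclusion.

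The only delicate point I foresee is pinning down the measure on $\mathcal{T}$, after which the argument is a single direct invocation of Fubini and requires no analytic input beyond what Proposition~\ref{prop} already supplies. I therefore do not anticipate a substantive obstacle; the statement is essentially a measure-theoretic consequence of the almost-all hypothesis on $\mathcal{S}$ combined with the product structure induced by the topology.
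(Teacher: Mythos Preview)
Your proof is correct and follows essentially the same approach as the paper. Both arguments rest on the single measure-theoretic fact that the complement $\mathcal{B}=\mathbb{R}^{K\times K}\setminus(\mathcal{F}\cap\mathcal{S})$ is null, combined with Fubini on the product decomposition $\mathbb{R}^{|Z|}\times\mathbb{R}^{|\bar Z|}$; the paper runs this by contradiction (positive measure in $\mathcal{T}\setminus\mathcal{T}'$ forces full fibers and hence positive measure in $\mathcal{F}\setminus\mathcal{F}_s$), whereas you run the direct Fubini contrapositive, but the content is identical. One minor sloppiness: you identify $\mathcal{T}$ with $\{0\}\times\mathbb{R}^{|\bar Z|}$ rather than $\{0\}\times(\mathbb{R}\setminus\{0\})^{|\bar Z|}$, but the difference is a null set and does not affect the argument.
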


 We finalize the proof by contradiction. Let $\mathbf{H}$ be in the almost all set $\mathcal{T'}$ defined in Lemma \ref{lemma::setMembershipNFC} and assume that $\mathbf{H}$ and all scaled versions thereof violate Condition~($\ast$), while each user achieves $1/2$ DoF. Let $\mathbf{\tilde{H}}$ be a fully-connected IC matrix in $\mathcal{S}$ for $\text{DoF}_i=1/2$, $i=1, \mydots, K$, which, upon restriction to the complement of the zero-set of $\mathcal{T}$ corresponding to $\mathcal{T'}$, yields $\mathbf{H}$.
 Next, we observe that if injectivity is violated for a set, it is also violated for its supersets. Hence, since $\mathcal{W}^{(\mathbf{H})} \subseteq \mathcal{W}^{(\mathbf{\tilde{H}})}$, and all diagonal entries of $\mathbf{\tilde{H}}$ and $\mathbf{H}$ are identical, if $\mathbf{H}$ and all scaled versions thereof violate Condition ($\ast$), so do $\mathbf{\tilde{H}}$ and all scaled versions thereof. But then, however, as necessity was already established above for all fully-connected IC matrices in the set $\mathcal{S}$ for $\text{DoF}_i=1/2$, $i=1, \mydots, K$, we are left with a contradiction.
 


\section{Proof of Necessity in Theorem~\ref{thm1} for all 3-User Non-Fully-Connected IC Matrices} \label{section::3UserNonFullyConnected}

While Theorem~\ref{thm1} established necessity for almost all IC matrices and for arbitrary $K$, a stronger result is possible in (at least) the $3$-user case. Specifically, necessity in Theorem \ref{thm1} can be shown to hold for \textit{all} channel matrices $\mathbf{H}$ of every fixed non-fully-connected topology. The corresponding proof proceeds by direct enumeration of all possible channel topologies combined with the application of Condition~($\ast$) and a result in \cite{Wu15}.





For the first two topologies necessity follows directly as the topology per se implies the existence of a scaled version of $\mathbf{H}$ for which Condition~($\ast$) holds. (Note that scaling does not change the topology.)

\textit{Topology 1}. We consider the case $h_{ij} = h_{ji}=0$, for $i \neq j$, and set w.l.o.g.\footnote {This specific choice comes w.l.o.g, as we can simply relabel the users, e.g. when $h_{23}=h_{32}=0$, we relabel user 1 as user 2 and user 2 as user 3. We will exploit this symmetry in all topologies.}  $i=1$ and $j=2$, so that 
 \begin{equation*} \mathbf{H}=\begin{pmatrix} h_{11}& 0 &h_{13}\\ 0&h_{22}&h_{23}\\ h_{31}&h_{32}&h_{33} \end{pmatrix}\hspace{-2.6pt}.  \end{equation*}
The remaining links (apart from the direct links between the users corresponding to the diagonal entries) may or may not be present, i.e., $h_{13}, h_{23}, h_{31}$, $h_{32}$ may or may not be nonzero. 
We now scale  $\mathbf{H}$ to convert it into 
\begin{equation} \label{tildeHNotFully}  \mathbf{\tilde{H}}=\begin{pmatrix} \sqrt{2}&0&\tilde{h}_{13}\\ 0&g_2&\tilde{h}_{23}\\ \tilde{h}_{31}&\tilde{h}_{32}&g_3 \end{pmatrix}\hspace{-2.6pt}, \end{equation}
 where $\tilde{h}_{13}, \tilde{h}_{23}, \tilde{h}_{31}, \tilde{h}_{32} \in \{ 0,1 \}$, and $g_2, g_3 \in \mathbb{R} \setminus \{0\}$. As $\sqrt{2}$ is irrational and $\mathcal{W}^{(\mathbf{\tilde{H}})} \subseteq \mathbb{N}$, (\ref{alternativeFormulationOfConditionStar}) implies that user~1 cannot violate Condition~($\ast$) in $\mathbf{\tilde{H}}$. 
 If user~2 is to violate Condition~($\ast$) in $\mathbf{\tilde{H}}$, then $g_2$ must be in $\mathbb{Q}$. Assuming that this is, indeed, the case, we next scale the first and second rows of $\tilde{\mathbf{H}}$ by $\frac{\sqrt{3}}{g_2}$ and the third column by $\frac{g_2}{\sqrt{3}}$ (to keep the off-diagonal components in $\{ 0, 1 \}$)  to get
\begin{equation*}  \mathbf{\overline{H}}=\begin{pmatrix} \frac{\sqrt{6}}{g_2}&0& \tilde{h}_{13}\\ 0&\sqrt{3}&\tilde{h}_{23}\\ \tilde{h}_{31}&\tilde{h}_{32}&\frac{g_2g_3}{\sqrt{3}} \end{pmatrix}\hspace{-2.6pt}. \end{equation*}
Owing to $g_2 \in \mathbb{Q}$, the first diagonal entry in $\mathbf{\overline{H}}$ is irrational so that user 1 cannot violate Condition~($\ast$) in $\mathbf{\overline{H}}$. Likewise, user 2 cannot violate Condition~($\ast$)  in $\mathbf{\overline{H}}$ as $\sqrt{3}$ is irrational. If user 3 is to violate Condition~($\ast$)  in $\mathbf{\overline{H}}$, then $\frac{g_3}{\sqrt{3}}$ must be in $\mathbb{Q}$. Assuming that this is, indeed, the case, we next
scale the third row of $\mathbf{\overline{H}}$ by $\frac{\sqrt{15}}{g_2g_3}$, and the first and second column by $\frac{g_2g_3}{\sqrt{15}}$ to obtain 
\begin{equation*}  \mathbf{H'}=\begin{pmatrix} \frac{g_3 \sqrt{2}}{\sqrt{5}}&0&\tilde{h}_{13}\\ 0& \frac{g_2g_3}{\sqrt{5}}&\tilde{h}_{23}\\\tilde{h}_{31}&\tilde{h}_{32}& \sqrt{5} \end{pmatrix}\hspace{-2.6pt}. \end{equation*}
Note that since $g_2, \frac{g_3}{\sqrt{3}} \in \mathbb{Q}$, all diagonal entries of $\mathbf{H'}$ are irrational, so that none of the users in $\mathbf{H'}$ can violate Condition~($\ast$). We have hence established that user $2$ violating Condition~($\ast$) in $\mathbf{\tilde{H}}$ implies the existence of a scaled version of $\mathbf{H}$, namely $\mathbf{H'}$, that satisfies Condition~($\ast$). It remains to treat the case of user 3 violating Condition~($\ast$) in $\mathbf{\tilde{H}}$. In that case, again, $g_3$ must be in $\mathbb{Q}$. We scale $\tilde{\mathbf{H}}$ to turn it into $\mathbf{H'}$, and note that user 1 and user 3 cannot violate Condition~($\ast$) in $\mathbf{H'}$ as $\frac{g_3\sqrt{2}}{\sqrt{5}}$ and $\sqrt{5}$ are irrational. If user 2 is to violate Condition~($\ast$) in $\mathbf{H'}$, as $g_3 \in \mathbb{Q}$, $\frac{g_2}{\sqrt{5}}$ must be in $\mathbb{Q}$, which, in turn, would result in an $\mathbf{\overline{H}}$ that has all its diagonal entries irrational and would therefore satisfy Condition~($\ast$). This establishes that user $3$ violating Condition~($\ast$) in $\mathbf{\tilde{H}}$ implies the existence of a scaled version of $\mathbf{H}$, namely $\mathbf{\overline{H}}$, that satisfies Condition~($\ast$). In summary, we have established that 
for every $\mathbf{H}$ of Topology 1 there always exists at least one scaled version of $\mathbf{H}$ satisfying Condition~($\ast$).

\textit{Topology 2}. We consider the case $h_{ik}=h_{ji}=h_{kj}=0$, for distinct $i, j$, $k$. For concreteness and again w.l.o.g, we set $i=1, j=2, k=3$, which leads to the following IC matrix
 \begin{equation*} \mathbf{H}=\begin{pmatrix} h_{11}&h_{12}&0\\ 0&h_{22}&h_{23}\\ h_{31}&0&h_{33} \end{pmatrix},\hspace{-2.6pt}\end{equation*}
with $h_{12}, h_{23}, h_{31} \neq 0$. Note that if any of $h_{12},h_{23},h_{31}$ were equal to zero, we would be back to Topology 1. 

We next scale $\mathbf{H}$ to convert it into  
\begin{equation*}  \mathbf{\tilde{H}}=\begin{pmatrix} \sqrt{2}&1&0\\ 0&\sqrt{2}&1\\1&0&g_3 \end{pmatrix}\hspace{-2.6pt}, \end{equation*}
where $g_3 \in \mathbb{R} \setminus \{0 \}$. As $\sqrt{2}$ is irrational and $\mathcal{W}^{(\mathbf{\tilde{H}})} \subseteq \mathbb{N}$ users 1 and 2 cannot violate Condition~($\ast$) in $\mathbf{\tilde{H}}$. If user 3 were to violate Condition~($\ast$) in $\mathbf{\tilde{H}}$, $g_3$ would have to be in $\mathbb{Q}$. In this case, we could convert $\mathbf{\tilde{H}}$ (by multiplying the first column and the third row by $\frac{g_3}{\sqrt{3}}$ and $\frac{\sqrt{3}}{g_3}$, respectively) into 
\begin{equation*}  \mathbf{\overline{H}}=\begin{pmatrix} \frac{g_3 \sqrt{2}}{\sqrt{3}}&1&0\\ 0&\sqrt{2}&1\\1&0&\sqrt{3} \end{pmatrix}\hspace{-2.6pt}, \end{equation*}
which can not violate Condition~($\ast$) as all its diagonal entries are irrational. Following the same playbook as in Topology 1, we have hence established that for every $\mathbf{H}$ of Topology 2, there always exists at least one scaled version of $\mathbf{H}$ satisfying Condition~($\ast$). \par

The proof for the remaining topologies is organized according to the number of missing links and we shall argue by contradiction in all cases as follows: Suppose that each user achieves $1/2$ DoF and Condition~($\ast$) is violated for $\mathbf{H}$ and all its scaled versions. Under these assumptions, we shall identify a scaled version of $\mathbf{H}$ that does not allow $3/2$ DoF in total, implying that, owing to Remark~\ref{remarkDoFScaling},  $\mathbf{H}$ itself does not allow $3/2$ DoF in total, which establishes the contradiction.

\textit{One missing link.} We start with the case where exactly one non-diagonal entry of the IC matrix is zero, i.e., $h_{ij}=0$, for $i \neq j$. For concreteness and w.l.o.g.\footnote{Again, the other choices for $i$ and $j$ follow by simply relabeling users.}, we set $i=1$ and $j=2$ and consider the corresponding IC matrix 
 \begin{equation*} \mathbf{H}=\begin{pmatrix} h_{11}&0&h_{13}\\ h_{21}&h_{22}&h_{23}\\ h_{31}&h_{32}&h_{33} \end{pmatrix}\hspace{-2.6pt},  \end{equation*}
with all coefficients but $h_{12}$ nonzero. 
We next scale $\mathbf{H}$ to convert it into 
\begin{equation*}  \mathbf{\tilde{H}}=\begin{pmatrix} g_1&0&1\\ 1&g_2&1\\1&1&g_3 \end{pmatrix}\hspace{-2.6pt}. \end{equation*}
As $\mathbf{\tilde{H}}$ violates Condition~($\ast$) by assumption and $\mathcal{W}^{(\mathbf{\tilde{H}})} \subseteq \mathbb{N}$, at least one diagonal entry of $\tilde{\mathbf{H}}$ must be in $\mathbb{Q}$. We can now apply \cite[Th. 8]{Wu15} (See Appendix \ref{4Th8restate}) as follows. If $g_1$ is in $\mathbb{Q}$, we set $i=1, j=3, k=2$ in \cite[Th. 8]{Wu15} to conclude that $\mathbf{\tilde{H}}$ does not allow $3/2$ DoF in total, thereby establishing the contradiction. The argument for $g_2$ or $g_3$ rational follows along the exact same lines. 

\textit{Two missing links.} Next, we consider the case where exactly two off-diagonal entries of the IC matrix are equal to zero. This case will be dealt with by splitting it up into five topologies as follows: $h_{ij}=h_{ji}=0$, $h_{ij}=h_{ik}=0$, $h_{ij}=h_{jk}=0$, $h_{ij}=h_{ki}=0$, and $h_{ij}=h_{kj}=0$, for distinct $i, j, k$. The first case is already covered by Topology 1. The remaining cases are organized into Topologies 3, 4, 5, and 6, respectively.    \newline
\textit{Topology 3.} We have $h_{ij}=h_{ik}=0$, for distinct $i, j, k$. For concreteness and again w.l.o.g., we set $i=1$, $j=2$, $k=3$, and consider the corresponding IC matrix

 \begin{equation*} \mathbf{H}=\begin{pmatrix} h_{11}&0& 0\\ h_{21}&h_{22}&h_{23}\\ h_{31}&h_{32}&h_{33} \end{pmatrix}\hspace{-2.6pt},  \end{equation*}
with $h_{21}, h_{23}, h_{31}, h_{32}$ all nonzero real numbers. 
Next, we scale $\mathbf{H}$ to convert it into  
\begin{equation*}  \mathbf{\tilde{H}}=\begin{pmatrix} \sqrt{2}&0&0\\ 1&g_2&1\\1&1&g_3 \end{pmatrix}\hspace{-2.6pt}, \end{equation*}
where $g_2$ and $g_3$ are nonzero real numbers. Now, note that user 1 in $\mathbf{\tilde{H}}$ cannot violate Condition~($\ast$) as $\sqrt{2}$ is irrational and $\mathcal{W}^{(\mathbf{\tilde{H}})} \subseteq \mathbb{N}$. If user 2 or user 3 were to violate Condition~($\ast$) in $\mathbf{\tilde{H}}$, then $g_2$ or $g_3$, respectively, would have to be in $\mathbb{Q}$, in which case we can again apply \cite[Th. 8]{Wu15}, namely with $i=2, j =1, k=3$ and $i=3, j=1, k=2$, respectively, to conclude that the total number of DoF in $\mathbf{\tilde{H}}$ is less than $3/2$. This establishes the contradiction.

\textit{Topology 4}. We have $h_{ij}=h_{jk}=0$, for distinct  $i, j, k$. For concreteness and again w.l.o.g., we set $i=1$, $j=2$, $k=3$, and consider the corresponding IC matrix
 \begin{equation*} \mathbf{H}=\begin{pmatrix} h_{11}&0& h_{13}\\ h_{21}&h_{22}&0\\ h_{31}&h_{32}&h_{33} \end{pmatrix}\hspace{-2.6pt},  \end{equation*}
 where $h_{13}, h_{21}, h_{31}$, and $h_{32}$ are nonzero real numbers.
Next, we scale $\mathbf{H}$ to convert it into  
\begin{equation*}  \mathbf{\tilde{H}}=\begin{pmatrix} \sqrt{2}&0&1\\ 1&g_2&0\\1&1&g_3 \end{pmatrix}\hspace{-2.6pt}, \end{equation*}
where $g_2$ and $g_3$ are nonzero real numbers. First, note that user 1 cannot violate Condition~($\ast$) in $\mathbf{\tilde{H}}$ as $\sqrt{2}$ is irrational and $\mathcal{W}^{(\mathbf{\tilde{H}})} \subseteq \mathbb{N}$. If user 2 were to violate Condition~($\ast$), $g_2$ would have to be in $\mathbb{Q}$, and we can apply  \cite[Th. 8]{Wu15} with $i=2, j =1, k=3$ to conclude that the total number of DoF is less than $3/2$, which establishes the contradiction. If user 3 were to violate Condition~($\ast$) in $\mathbf{\tilde{H}}$, $g_3$ would have to be in $\mathbb{Q}$. We then scale the first row and the third column of $\tilde{\mathbf{H}}$ by $\frac{g_3}{\sqrt{3}}$ and $\frac{\sqrt{3}}{g_3}$, respectively, to obtain 
\begin{equation*}  \mathbf{H'}=\begin{pmatrix} \frac{ \sqrt{2} g_3}{\sqrt{3}}&0&1\\ 1&g_2&0\\1&1&\sqrt{3} \end{pmatrix}\hspace{-2.6pt}. \end{equation*}
Now, we note that since $g_3 \in \mathbb{Q}$, the first diagonal entry in $\mathbf{H}'$ is irrational and hence user 1 cannot violate Condition~($\ast$) in $\mathbf{H'}$. User 3 cannot violate Condition~($\ast$) in $\mathbf{H}'$ as $\sqrt{3}$ is irrational. If user 2 in $\mathbf{H'}$ were to violate Condition~($\ast$), $g_2$ would have to be in $\mathbb{Q}$, and we can apply \cite[Th. 8]{Wu15} with $i=2, j =1, k=3$ to conclude that the total number of DoF is less than $3/2$, which establishes the contradiction.

\textit{Topology 5}. We consider the case $h_{ij}=h_{ki}=0$, for distinct $i, j, k$. For concreteness and w.l.o.g., we set $i=1$, $j=2$, $k=3$ and consider the corresponding IC matrix
 \begin{equation*} \mathbf{H}=\begin{pmatrix} h_{11}&0& h_{13}\\ h_{21}&h_{22}&h_{23}\\ 0&h_{32}&h_{33} \end{pmatrix}\hspace{-2.6pt},  \end{equation*}
with $h_{13}, h_{21}, h_{23}, h_{32}$ nonzero real numbers.
We next scale $\mathbf{H}$ to convert it into  
\begin{equation*}  \mathbf{\tilde{H}}=\begin{pmatrix} g_1&0&1\\ 1&g_2&1\\ 0&1& \sqrt{2} \end{pmatrix}\hspace{-2.6pt}, \end{equation*}
where $g_1$ and $g_2$ are nonzero real numbers. First, note that user 3 cannot violate Condition~($\ast$) in $\mathbf{\tilde{H}}$ as $\sqrt{2}$ is irrational and $\mathcal{W}^{(\mathbf{\tilde{H}})} \subseteq \mathbb{N}$. If user 2 were to violate Condition~($\ast$) in $\mathbf{\tilde{H}}$, $g_2$ would have to be in $\mathbb{Q}$. We then scale the third row and the second column of $\tilde{\mathbf{H}}$ by $\frac{g_2}{\sqrt{3}}$ and $\frac{\sqrt{3}}{g_2}$, respectively, to obtain 
\begin{equation*}  \mathbf{H'}=\begin{pmatrix}g_1&0&1\\ 1&\sqrt{3}&1\\0&1& \frac{ \sqrt{2} g_2}{\sqrt{3}} \end{pmatrix}\hspace{-2.6pt}, \end{equation*}
and note that since $g_2 \in \mathbb{Q}$, the second and the third diagonal entries of $\mathbf{H}'$ are irrational, which implies that 
users 2 and 3 cannot violate Condition~($\ast$) in $\mathbf{H'}$. If user 1 in $\mathbf{H}'$ were to violate Condition~($\ast$), $g_1$ would have to be in $\mathbb{Q}$ and we can apply \cite[Th. 8]{Wu15} with $i=1, j =3, k=2$ to conclude that the total number of DoF of $\mathbf{H}'$ is less than $3/2$. This establishes the contradiction.

\textit{Topology 6.} We finally consider the case $h_{ij}=h_{kj}=0$, and set w.l.o.g. $j=1$, $i=2$, $k=3$ to get  the IC matrix 
 \begin{equation*} \mathbf{H}=\begin{pmatrix} h_{11}& h_{12} &h_{13}\\ 0&h_{22}&h_{23}\\ 0&h_{32}&h_{33} \end{pmatrix}\hspace{-2.6pt},  \end{equation*}
with $h_{12}, h_{13}, h_{23}, h_{32}$ nonzero real numbers. We scale $\mathbf{H}$ to convert it into  
\begin{equation*}  \mathbf{\tilde{H}}=\begin{pmatrix} \sqrt{2}&1&1\\ 0&g_2&1\\0&1&g_3 \end{pmatrix}\hspace{-2.6pt},
 \end{equation*}
 where  $g_2$ and $g_3$ are nonzero real numbers. First, note that user 1 cannot violate Condition~($\ast$) in $\mathbf{\tilde{H}}$ as $\sqrt{2}$ is irrational and $\mathcal{W}^{(\mathbf{\tilde{H}})} \subseteq \mathbb{N}$. If user 2 or user 3 in $\mathbf{\tilde{H}}$ were to violate Condition~($\ast$), then $g_2$ or $g_3$, respectively, would have to be in $\mathbb{Q}$, and we can apply  \cite[Th. 8]{Wu15} with $i=2, j =3, k=1$ or $i=3, j=2, k=1$, respectively, to conclude that $3/2$ DoF in total cannot be achieved, thereby establishing the contradiction.

\textit{3 missing links.} We now consider IC matrices with exactly three off-diagonal entries equal to zero and enumerate the corresponding possible topologies as follows. First, we set, w.l.o.g., $h_{ij}=0$, for distinct $i,j$. We need to choose two more zero entries among the remaining off-diagonal coefficients $h_{ji}, h_{jk}, h_{ik}, h_{ki}, h_{kj}$, with $k \neq i,j$. There is a total of $\binom{5}{2} = 10$ choices. All choices including $h_{ji}=0$ result in Topology~1 and have hence already been dealt with. This leaves us with the $\binom{4}{2}=6$ choices $h_{ik}=h_{ki}=0$, $h_{jk}=h_{kj}=0$, $h_{jk}=h_{ki}=0$, $h_{ki}=h_{kj}=0$,  $h_{ik}=h_{kj}=0$, $h_{jk}=h_{ik}=0$. The first two cases are covered by Topology~1 and the third case is comprised by Topology~2. The remaining three topologies are identical. To see this, consider $h_{ij}=h_{jk}=h_{ik}=0$ and relabel the users according to $j'=k, k'=j$. This leads to $h_{ik'}=h_{k'j'}=h_{ij'}=0$, which is the fifth topology above.  Similarly, if we relabel the users according to $k'=i, i'=j, j'=k$, we obtain $h_{k'i'}=h_{i'j'}=h_{k'j'}=0$, which results in the fourth topology. The remaining case is dealt with by setting w.l.o.g. $i=1, j=2, k=3$ in $h_{ij}=h_{jk}=h_{ik}=0$, leading to 
 \begin{equation*} \mathbf{H}=\begin{pmatrix} h_{11}& 0 &0 \\ h_{21}&h_{22}&0\\ h_{31}&h_{32}&h_{33} \end{pmatrix}\hspace{-2.6pt},  \end{equation*}
with $h_{21}, h_{31}$, $h_{32}$ nonzero real numbers. We scale $\mathbf{H}$ to convert it into
\begin{equation*}  \mathbf{\tilde{H}}=\begin{pmatrix} \sqrt{2}&0&0\\ 1&g_2&0\\1&1& \sqrt{2} \end{pmatrix}\hspace{-2.6pt}, \end{equation*}
where $g_2$ is a nonzero real number, and note that users 1 and 3 cannot violate Condition~($\ast$) in $\mathbf{\tilde{H}}$ as  $\sqrt{2}$ is irrational and $\mathcal{W}^{(\mathbf{\tilde{H}})} \subseteq \mathbb{N}$. If user 2 in $\mathbf{\tilde{H}}$ is to violate Condition~($\ast$), then $g_2$ must be in $\mathbb{Q}$, and we can apply \cite[Th. 8]{Wu15} with $i=2, j =1, k=3$ to conclude that $3/2$ DoF in total cannot be achieved, thereby establishing the desired contradiction. \par

\textit{More than 3 missing links}. For IC matrices with more than three off-diagonal entries equal to zero, there always exist users $i, j$ such that $h_{ij}=h_{ji}=0$ and hence we are back to Topology~1. 

\section{AN APPLICATION} \label{sec:applications}
We now show how our results allow to develop a significant generalization of \cite[Thm. 8]{Wu15}, which was the main technical engine in our proof of necessity for \textit{all} $3$-user channel matrices in the previous section. Specifically, we provide an extension of \cite[Thm. 8]{Wu15} from the 3-user case to the $K$-user case, which, in addition, applies to almost all channel matrices whereas \cite[Thm. 8]{Wu15} applies to the measure-zero set of channel matrices with algebraic off-diagonal entries only. We are also able to relax the assumption of the channel coefficients $h_{ii}, h_{ij}, h_{ki},h_{kj}$ in \cite[Thm. 8]{Wu15} being nonzero rational numbers to allow a.a. real numbers. Finally, \cite[Thm. 8]{Wu15} makes a statement on the total number of DoF, whereas our extension is in terms of DoF achievable by individual users. 
\begin{theorem} \label{appTheoremGen}
For almost all $K \times K$ IC matrices $\mathbf{H}$, if there exist distinct users $i, j, k$ such that $\frac{h_{ii}h_{kj}}{h_{ij}h_{ki}}$ is a non-zero rational number, then 1/2 DoF for each user cannot be achieved. 
\begin{proof} 
We first note that any scaled version of $\mathbf{H}$, including $\mathbf{H}$ itself, can be expressed as follows 
\begin{equation*}{\tilde{\mathbf{H}}}=\begin{pmatrix} r_1c_1 h_{11}&r_1c_2  h_{12}& \mydots & r_1c_K  h_{1K}\\ r_2c_1  h_{21}&r_2c_2  h_{22}& \mydots & r_2 c_K  h_{2K}\\ \mydots & \vdots & \vdots & \vdots \\ r_K c_1  h_{K1}&r_K c_2  h_{K2}& \mydots & r_K c_K  h_{KK}\end{pmatrix}\hspace{-2.6pt}, \end{equation*}
where $r_i$ and $c_j$, for $i,j=1, \mydots, K$, are nonzero real numbers. The proof is effected by showing that $\tilde{\mathbf{H}}$ violates Condition~($\ast$) for all $r_i, c_j \in \mathbb{R} \text{\textbackslash{}}  \{ 0 \}$, $i,j=1, \mydots, K$. To this end, note that for all $r_i, c_j \in \mathbb{R} \text{\textbackslash{}}  \{ 0 \}$, $i, j=1, \mydots, K$, the following holds
\begin{equation*}
\tilde{h}_{ii}=h_{ii}r_{i}c_{i}= \frac{(h_{ij}r_{i}c_{j})(h_{ki}r_{k}c_{i})}{h_{kj}r_{k}c_{j}} \frac{a}{b} = \frac{ \tilde{h}_{ij} \tilde{h}_{ki} }{\tilde{h}_{kj}} \frac{a}{b},
\end{equation*} 
where $a, b \in \mathbb{Z}$ such that $\frac{a}{b}= \frac{h_{ii} h_{kj} }{h_{ij}h_{ki}}$. Since $a \tilde{h}_{ij} \tilde{h}_{ki} ,  b \tilde{h}_{kj} \in \mathcal{W}^{(\tilde{\mathbf{H}})}$ for all $r_i, c_j \in \mathbb{R} \text{\textbackslash{}}  \{ 0 \}$, $i, j =1 , \dots, K$, (\ref{rem3}) implies that Condition ($\ast$) is violated for all scaled versions of $\mathbf{H}$. Application of Theorem \ref{thm1} now yields the desired conclusion that $1/2$ DoF for each user cannot be achieved.  
\end{proof}
\end{theorem}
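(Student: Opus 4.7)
The plan is to invoke the necessity direction of Theorem \ref{thm1}: for almost all IC matrices $\mathbf{H}$, achieving $1/2$ DoF for each user forces Condition~($\ast$) to hold for $\mathbf{H}$ itself or for at least one scaled version. Arguing by contrapositive, it suffices to show that under the rationality hypothesis on $\frac{h_{ii}h_{kj}}{h_{ij}h_{ki}}$, no scaled version of $\mathbf{H}$ satisfies Condition~($\ast$); Theorem \ref{thm1} then rules out $1/2$ DoF for each user on the relevant almost all set.

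The first step is to observe that the ratio in the hypothesis is a scaling invariant. Parametrizing an arbitrary scaling via $\tilde{h}_{pq}=r_{p}c_{q}h_{pq}$ with nonzero $r_{p},c_{q}$, a direct cancellation gives
\begin{equation*}
\frac{\tilde{h}_{ii}\tilde{h}_{kj}}{\tilde{h}_{ij}\tilde{h}_{ki}}=\frac{h_{ii}h_{kj}}{h_{ij}h_{ki}}=\frac{a}{b},
\end{equation*}
for some $a,b\in\mathbb{Z}\setminus\{0\}$. Rearranging yields $\tilde{h}_{ii}=\frac{a}{b}\,\frac{\tilde{h}_{ij}\tilde{h}_{ki}}{\tilde{h}_{kj}}$, equivalently
\begin{equation*}
b\,\tilde{h}_{ii}\,\tilde{h}_{kj}-a\,\tilde{h}_{ij}\,\tilde{h}_{ki}=0.
\end{equation*}

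The second step is to apply the equivalent formulation of Condition~($\ast$) provided in Remark \ref{remarkEquivalentFormulation}: Condition~($\ast$) fails for user $i$ exactly when there exist nonzero integer polynomials $P,Q$ in the off-diagonal entries of $\tilde{\mathbf{H}}$ with $\tilde{h}_{ii}Q-P=0$. Setting $Q:=b\,\tilde{h}_{kj}$ and $P:=a\,\tilde{h}_{ij}\,\tilde{h}_{ki}$, both are nonzero monomials in off-diagonal entries with integer coefficients (since $i,j,k$ are distinct, none of $\tilde{h}_{kj},\tilde{h}_{ij},\tilde{h}_{ki}$ is a diagonal entry), and they satisfy the forbidden identity. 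Hence Condition~($\ast$) is violated in $\tilde{\mathbf{H}}$ for user~$i$, for every choice of scaling.

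The main obstacle, in fact the only subtlety, is making sure that the witnesses $P,Q$ actually live in $\mathcal{W}^{(\tilde{\mathbf{H}})}\times\mathcal{W}^{(\tilde{\mathbf{H}})}$ (equivalently, that they are integer linear combinations of monomials in off-diagonal entries as defined in Section \ref{secmain}); this is immediate here since $P$ and $Q$ are integer-scalar multiples of a single monomial in off-diagonal coefficients. Combining the scaling invariance of the ratio with the violation of Condition~($\ast$) for every scaling, and applying the necessity part of Theorem \ref{thm1}, delivers the conclusion that $1/2$ DoF for each user cannot be achieved for almost all such $\mathbf{H}$.
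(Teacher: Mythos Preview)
Your proof is correct and follows essentially the same approach as the paper: both arguments show that the ratio $\frac{h_{ii}h_{kj}}{h_{ij}h_{ki}}$ is invariant under scaling, deduce the identity $\tilde{h}_{ii}\cdot b\,\tilde{h}_{kj}=a\,\tilde{h}_{ij}\tilde{h}_{ki}$ for every scaled version $\tilde{\mathbf H}$, and conclude via the necessity direction of Theorem~\ref{thm1}. The only cosmetic difference is that you invoke the polynomial formulation of Remark~\ref{remarkEquivalentFormulation} directly, whereas the paper phrases the violation by exhibiting the witnesses $a\,\tilde{h}_{ij}\tilde{h}_{ki}$ and $b\,\tilde{h}_{kj}$ as elements of $\mathcal{W}^{(\tilde{\mathbf H})}$ and appealing to~(\ref{rem3}); these are equivalent by Remark~\ref{remarkEquivalentFormulation}.
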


\section*{Appendices} \label{Appendix}

\subsection{Implications of requiring $1/2$ DoF for each user}
\label{subsection::ImplicationsOfRequiring1.2DoF}

We start with a definition needed in the formulation of the main result, Lemma \ref{lemma::1/2DoFMeansAtMostK/2Total} below.

\begin{definition} \label{def::totallyDisconnected} (Totally-disconnected users) 
We say that a user is totally disconnected if it does not experience interference from any other user and does not cause interference to any other user. Concretely, the $i$-th user, $i=1, \mydots, K$, is totally disconnected if  the $i$-th row and the $i$-th column of $\mathbf{H}$ have no nonzero off-diagonal elements. 
\end{definition}
\begin{lemma} \label{lemma::1/2DoFMeansAtMostK/2Total}
If each user in a $K \times K$ IC matrix $\mathbf{H}$, with $L$ totally disconnected users, is to achieve at least $1/2$ DoF, then the total number of DoF is exactly $L+ (K-L)/2$, where the totally-disconnected users achieve $1$ DoF each and the remaining users achieve exactly $1/2$ DoF each. \par
\begin{proof}
It follows directly that each totally-disconnected user achieves $1$ DoF as these users are interference-free. Now, consider a non-totally-disconnected user, say user $i$, $i \in \{1, \mydots, K\}$. Then, there exists  a distinct user $j  \in \{1, \mydots, K\} $
 which either experiences interference from user $i$ or causes interference to user $i$ or both. Next, consider the $2$-user IC $\mathbf{\tilde{H}}$ obtained by removing all users except for users $i$ and $j$ and note that $\text{DoF}_{i}(\mathbf{H}) \leqslant \text{DoF}_{i}(\mathbf{\tilde{H}})$ and $\text{DoF}_{j}(\mathbf{H}) \leqslant \text{DoF}_{j}(\mathbf{\tilde{H}})$ as
 the removed users simply constitute interference for users $i$ and $j$. Now, we know, thanks to \cite[Corollary 1]{Madsen05}, that in a $2$-user IC, the total number of DoF is bounded by 1, which together with
 $\text{DoF}_{i}(\mathbf{H}) \geqslant 1/2$ and $\text{DoF}_{j}(\mathbf{H}) \geqslant 1/2$, both by assumption, results in
 $\text{DoF}_{i}(\mathbf{H}) = \text{DoF}_{j}(\mathbf{H}) =1/2$.
 Summing over the $L$ totally-disconnected users and the $K-L$ non-totally-disconnected users yields $L+\frac{K-L}{2}$ DoF in total.
\end{proof}
\end{lemma}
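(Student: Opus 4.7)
My plan is to split the users into two groups: the $L$ totally-disconnected ones and the $K-L$ remaining ones, and argue about them separately. For the totally-disconnected users, since by Definition \ref{def::totallyDisconnected} they neither receive nor cause interference, each such user effectively operates over a point-to-point single-antenna Gaussian channel and hence achieves at most $1$ DoF; combined with the lower bound of $1/2$ DoF assumed in the hypothesis, the cleanest way is to note that the maximum $1$ DoF is trivially achievable (e.g., by uniform scalar inputs) so each totally-disconnected user contributes exactly $1$ DoF.

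For each non-totally-disconnected user $i$, the definition guarantees the existence of some user $j \neq i$ with $h_{ij} \neq 0$ or $h_{ji} \neq 0$. The key step is to pass to the $2$-user sub-IC $\mathbf{\tilde H}$ obtained by retaining only users $i$ and $j$ (i.e., setting the inputs of all other transmitters to zero and ignoring the outputs of all other receivers). A simple ``genie'' style argument shows
\ba
\text{DoF}_i(\mathbf{H})\leqslant \text{DoF}_i(\mathbf{\tilde H}),\qquad \text{DoF}_j(\mathbf{H})\leqslant \text{DoF}_j(\mathbf{\tilde H}),
\ea
because deactivating the interfering transmitters $\{1,\ldots,K\}\setminus\{i,j\}$ can only reduce the interference seen by receivers $i$ and $j$. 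Applying \cite[Corollary~1]{Madsen05}, which bounds the total DoF in any $2$-user IC by $1$, together with the standing assumption $\text{DoF}_i(\mathbf{H})\geqslant 1/2$ and $\text{DoF}_j(\mathbf{H})\geqslant 1/2$, forces $\text{DoF}_i(\mathbf{H})=\text{DoF}_j(\mathbf{H})=1/2$. Since $i$ was an arbitrary non-totally-disconnected user, this conclusion applies to all $K-L$ such users.

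Finally, invoking the additive decomposition $\text{DoF}(\mathbf{H})=\sum_{i=1}^{K}\text{DoF}_i$ established in Section~\ref{secsystem} yields
\ba
\text{DoF}(\mathbf{H})\;=\;L\cdot 1 \;+\; (K-L)\cdot\tfrac{1}{2}\;=\;L+\tfrac{K-L}{2},
\ea
as claimed. The only mildly subtle point is justifying the monotonicity $\text{DoF}_i(\mathbf{H})\leqslant\text{DoF}_i(\mathbf{\tilde H})$ at the single-user DoF level (rather than merely at the sum-DoF level); this is what the author addresses, and it follows directly from the multi-letter definition of $\text{DoF}_i$ in \eqref{rateDoFDefinition} by choosing the inputs of the discarded users to be deterministic constants, which leaves the two entropy terms in \eqref{rateDoFDefinition} unchanged as expressions over the reduced channel.
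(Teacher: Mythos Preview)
Your proof takes essentially the same route as the paper's: split off the totally-disconnected users, and for each non-totally-disconnected user $i$ pick a user $j$ linked to $i$, pass to the $2$-user sub-IC $\mathbf{\tilde H}$, invoke the monotonicity $\text{DoF}_i(\mathbf{H})\leqslant\text{DoF}_i(\mathbf{\tilde H})$, apply \cite[Corollary~1]{Madsen05} to cap $\text{DoF}_i(\mathbf{\tilde H})+\text{DoF}_j(\mathbf{\tilde H})\leqslant 1$, and conclude. This matches the paper's argument step for step.

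One small caution on your closing paragraph: the justification you give for $\text{DoF}_i(\mathbf{H})\leqslant\text{DoF}_i(\mathbf{\tilde H})$ is not quite right. Setting the discarded users' inputs to constants does reproduce the expression for $\text{DoF}_i(\mathbf{\tilde H})$ from \eqref{rateDoFDefinition}, but that does not by itself compare it to $\text{DoF}_i(\mathbf{H})$ evaluated at the \emph{original} (non-constant) inputs $X_1^n,\ldots,X_K^n$ appearing in the hypothesis. What is actually needed is a submodularity-type inequality for entropy of sums, namely that for independent $B,W,Z$ one has $H(B+W+Z)-H(B+Z)\leqslant H(B+W)-H(B)$ (applied with $B=h_{ij}X_j$, $W=h_{ii}X_i$, $Z=\sum_{k\neq i,j}h_{ik}X_k$). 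The paper itself does not spell this out either and simply asserts that ``the removed users simply constitute interference for users $i$ and $j$'', so your level of rigor here is comparable to the original.
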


\subsection{Preservation of individual DoF}
\begin{lemma} \label{lemma::EtkinPreserveDoF12}
For all IC matrices $\mathbf{H}$, the number of DoF achievable for each user is preserved under scaling according to Definition \ref{defScaling}.
\begin{proof}
The statement follows directly from \cite[Lemma 1]{Etkin09}. Specifically, \cite[Eqs. 2,3, and 4]{Etkin09} lead to the following conclusion: The capacity region of the IC with channel matrix $\mathbf{H}$ and that of any scaled version of $\mathbf{H}$ are asymptotically (in signal-to-noise ratio) identical. Therefore, the individual DoF, given by the pre-log factors of the corresponding individual rates $R_1, \mydots, R_K$, remain unchanged upon scaling of the underlying channel matrix.
\end{proof}
\end{lemma}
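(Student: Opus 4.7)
The plan is to deduce Lemma~\ref{lemma::EtkinPreserveDoF12} from the stronger fact that the capacity region itself is asymptotically (in $\snr$) invariant under scaling, as established in \cite[Lemma 1]{Etkin09}, and then to argue that asymptotic capacity-region equivalence forces equality of every individual $\text{DoF}_i$. Write $\tilde{\mathbf{H}} = \mathbf{D}\mathbf{H}\mathbf{D}'$ with $\mathbf{D}=\diag(d_1,\dots,d_K)$ and $\mathbf{D}'=\diag(d'_1,\dots,d'_K)$ having nonzero diagonal entries. I would first unpack what scaling does at the signal level: using $\tilde{X}_j := X_j/d'_j$ as the input at transmitter $j$ of $\tilde{\mathbf{H}}$, the output at receiver $i$ of $\tilde{\mathbf{H}}$ equals $d_i$ times the corresponding output of $\mathbf{H}$ plus the same noise $Z_i$. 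After normalization at the receiver, this amounts only to changing the transmit power constraint by a factor $(d'_j)^2$ at transmitter $j$ and the noise variance by $d_i^{-2}$ at receiver $i$, both of which perturb achievable rates by $O(1)$ in $\snr$.

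The second step is to pass from capacity-region invariance to individual-DoF invariance. Since $\text{DoF}(\mathbf{H}) = \sum_i \text{DoF}_i$ (as derived right after \eqref{rateDoFInequality2}), and since $\text{DoF}_i$ can be identified with the pre-log factor of the $i$-th component of any operating point on the capacity region that is sum-DoF-optimal, the asymptotic coincidence of the capacity regions of $\mathbf{H}$ and $\tilde{\mathbf{H}}$ up to $o(\log \snr)$ corrections forces equality of every $\text{DoF}_i$. Here I would explicitly invoke the bijection between admissible codebooks for $\mathbf{H}$ and those for $\tilde{\mathbf{H}}$ implemented by componentwise rescaling, which preserves each individual rate modulo $O(1)$ terms that vanish after division by $\tfrac{1}{2}\log \snr$.

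The main subtlety I anticipate is reconciling the entropy-difference definition in \eqref{rateDoFDefinition}, which is formulated in terms of the quantization $[\,\cdot\,]_m$ applied to specific discrete inputs $X_j^n$, with the capacity-region statement in \cite[Lemma 1]{Etkin09}. Substituting $\tilde{X}_j^n := X_j^n/d'_j$ into \eqref{rateDoFDefinition} for $\tilde{\mathbf{H}}$ turns $[\sum_j h_{ij} X_j^n]_m$ into $[d_i \sum_j h_{ij} X_j^n]_m$, so I would need the elementary estimate $|H([cY]_m)-H([Y]_m)| \le |\log|c||+O(1)$ for any nonzero constant $c$. Dividing by $mn$ then makes the extra terms vanish, ensuring that both entropy terms in \eqref{rateDoFDefinition} are preserved in the limit and completing the identification of $\text{DoF}_i(\mathbf{H})$ with $\text{DoF}_i(\tilde{\mathbf{H}})$. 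This bookkeeping on the quantization is straightforward but is where the concrete technical content of the proof would lie; the conceptual content is entirely captured by \cite[Lemma 1]{Etkin09}.
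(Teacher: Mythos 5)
Your proposal is correct and follows essentially the same route as the paper: both deduce the claim from the asymptotic capacity-region invariance under scaling established in \cite[Lemma 1]{Etkin09} and then read off the individual DoF as the pre-log factors of the individual rates. The additional bookkeeping you supply on the quantization $[\,\cdot\,]_m$ and the $O(1)$ perturbations is sound but not needed beyond what the paper's one-line argument already invokes.
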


\subsection{Proof of Proposition \ref{prop}} \label{proofOfProp}

The proof is inspired by the proofs of \cite[Thm. 3]{Stotz16} and \cite[Thm. 4]{Wu15}. We first construct self-similar input distributions according to
\begin{equation*}  
\tilde{X}_j = \sum_{m= 0}^\infty V_{jm} r^m,
\end{equation*} 
where $r \in (0,1)$ and $\{V_{jm}: m \geqslant 0 \}$ is a sequence of i.i.d copies of $V_j$, $j=1, \mydots, K$. This ansatz is identical to that in \cite[Eq. (148)]{Wu15}, apart from the choice of the similarity ratio $r$. The following statements hold for the a.a. set of IC matrices $\mathbf{H}$, defined in \cite[Thm. 4]{Wu15} and denoted as $\mathcal{L}$ henceforth. Using \cite[Eq. (154)]{Wu15}, it follows that 
\begin{equation} \label{eqWuImport1}
d \Bigg ( \sum_{j} h_{ij} \tilde{X}_j \Bigg) = \frac{H \Big(\sum_{j} h_{ij} V_j  \Big )}{\log{(1/r)}},
 \end{equation}
 for $i=1, \mydots, K$. Similarly, thanks to \cite[Eq. (155)]{Wu15}, we obtain
\begin{equation} \label{eqWuImport2}
d \Bigg ( \sum_{j \neq i}  h_{ij} \tilde{X}_j \Bigg) = \frac{H \Big(\sum_{j \neq i} h_{ij} V_j  \Big )}{\log{(1/r)}}.
 \end{equation}
Combining (\ref{eqWuImport1}) and (\ref{eqWuImport2}), and using \cite[Eqs. 156, 157]{Wu15}, we get 
\begin{equation} \label{eqDavid60}
\sup_{V_1, \mydots, V_K} \left [ \frac{H \Big(\sum_{j} h_{ij} V_j  \Big )}{\log{(1/r)}}- \frac{H \Big(\sum_{j \neq i} h_{ij} V_j  \Big )}{\log{(1/r)}} \right ] \geqslant \text{DoF}_i
\end{equation}  
 where DoF$_i$ is as defined in (\ref{rateDoFDefinition}).  Finally, thanks to (\ref{eqWuImport1}), there exists an $r \in (0,1)$ such that   \begin{equation}  \label{eqDavid59}
 \log{(1/r)}= \max_{i=1, \mydots, K}\frac{H \Big(\sum_{j} h_{ij} V_j  \Big )}{d \Big ( \sum_{j} h_{ij} \tilde{X}_j \Big)} \geqslant \max_{i=1, \mydots, K}  H \Big(\sum_{j} h_{ij} V_j  \Big ),
 \end{equation}
where the inequality follows from the fact that the information dimension of a one-dimensional random variable cannot exceed 1 \cite[Eq. (13)]{Wu15}. Combining (\ref{eqDavid60}) and (\ref{eqDavid59}), and noting that (\ref{eqDavid59}) holds for $\sum_{j} h_{ij} V_j$ replaced by $\sum_{j \neq i} h_{ij} V_j$ as well, (\ref{prop1}) follows for all ICs in $\mathcal{L}$, and hence $\mathcal{L} \subseteq \mathcal{S}$. The proof is concluded upon noting that $\mathcal{L}$ is an a.a. set.

\subsection{Entropy growth} \label{AppendixTao}
\begin{lemma}\text{(A simple incarnation of \cite[Thm. 2.8.2]{Tao10})}:\label{lemtao}
Let $X, Y_1, \mydots, Y_m$ be discrete random variables, all of finite entropy, such that
\begin{equation} \label{tao1}
\frac{H(X+Y_i)}{H(X)} = 1+ \mathcal{O}(\varepsilon),
\end{equation}
for $\varepsilon \in (0, 1/2)$ and $i=1, \! \mydots ,m$. Then, for finite $m$, 
\begin{equation} \label{tao2}
 \frac{H(X+Y_1+\mydots + Y_m)}{H(X)} = 1+ \mathcal{O}(\varepsilon).
\end{equation}
\begin{proof}
In \cite[Th. 2.8.2]{Tao10}, set $\log{K_i}= H(X)\mathcal{O}(\varepsilon)$, for $i=1, \! \mydots, m$, and take the additive group $G$ to be $\mathbb{R}$. This results in $H(X+Y_1+\mydots + Y_m) \leqslant H(X)+ m H(X) \mathcal{O}(\varepsilon)$. Divide this inequality by $H(X)$ and note that owing to (\ref{wki2}) the expression on the LHS of (\ref{tao2}) is greater than or equal to 1. The proof is concluded by realizing that $\mathcal{O}(\varepsilon) m =\mathcal{O}(\varepsilon)$ thanks to $m$ being finite and independent of $\varepsilon$. 
\end{proof}
\end{lemma}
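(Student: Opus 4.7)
The plan is to deduce this lemma as an immediate corollary of the entropy Plünnecke--Ruzsa (sum-set) inequality from additive combinatorics, specifically in the form of Theorem~2.8.2 in the cited reference \cite{Tao10}. That inequality says that if $X$ and independent $Y_1,\dots,Y_m$ take values in an additive group and each satisfies $H(X+Y_i) \leqslant H(X) + \log K_i$, then the multi-sum enjoys $H(X + Y_1 + \cdots + Y_m) \leqslant H(X) + \sum_{i=1}^{m} \log K_i$. In other words, small pairwise ``entropy doubling'' of $X$ against each $Y_i$ propagates additively to the sum; this is exactly the shape our hypothesis wants.

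First I would translate the hypothesis into the form expected by Tao's theorem: from $H(X+Y_i)/H(X) = 1 + \mathcal{O}(\varepsilon)$ we read off $H(X+Y_i) - H(X) = H(X)\cdot\mathcal{O}(\varepsilon)$, so set $\log K_i := H(X)\cdot\mathcal{O}(\varepsilon)$ and apply the cited theorem on the additive group $G=\mathbb{R}$. This yields
\[
H(X+Y_1+\cdots+Y_m) \;\leqslant\; H(X) + m\, H(X) \cdot \mathcal{O}(\varepsilon).
\]
Dividing by $H(X)$ and absorbing the finite constant $m$ (independent of $\varepsilon$) into the $\mathcal{O}(\varepsilon)$ constant gives the upper bound
\[
\frac{H(X+Y_1+\cdots+Y_m)}{H(X)} \;\leqslant\; 1 + \mathcal{O}(\varepsilon).
\]

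For the matching lower bound I would iterate inequality~(\ref{wki2}), namely $H(\alpha U + \beta V) \geqslant \max\{H(U),H(V)\}$ for independent $U,V$ and nonzero $\alpha,\beta$, peeling off one $Y_i$ at a time. This produces $H(X+Y_1+\cdots+Y_m) \geqslant H(X)$, so the ratio is bounded below by~$1$. Combining the two bounds yields the claimed $1 + \mathcal{O}(\varepsilon)$.

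The main obstacle is not computational but identifying the correct black-box result: once one recognises the hypothesis as a smallness statement on pairwise entropic Ruzsa distances $H(X+Y_i) - H(X)$, the tensorisation-type inequality of Tao does all of the work and the rest of the argument is a substitution of constants plus a one-line monotonicity bound. A subtle but essential point to flag is that both Tao's inequality and (\ref{wki2}) require $X, Y_1, \dots, Y_m$ to be mutually independent; this independence is not stated explicitly in the lemma but is assumed in every invocation of it in the main text, and should be read as an implicit hypothesis.
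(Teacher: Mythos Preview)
Your proposal is correct and follows essentially the same route as the paper: both invoke \cite[Thm.~2.8.2]{Tao10} with $\log K_i = H(X)\cdot\mathcal{O}(\varepsilon)$ over $G=\mathbb{R}$ to obtain the upper bound, divide by $H(X)$ and absorb the finite factor $m$, and use \eqref{wki2} for the matching lower bound of $1$. Your explicit remark that mutual independence of $X,Y_1,\dots,Y_m$ is an implicit hypothesis (required by both Tao's inequality and \eqref{wki2}) is a worthwhile clarification that the paper leaves unstated.
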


\subsection{Auxiliary lemma for the $K$-user case}

\begin{lemma} \label{lemmaAlmostAllSet} 
Let $\mathbf{H}$ be a $K \times K$ IC matrix. If $\mathbf{H}$ is in the a.a. set $\mathcal{S}$ covered by Proposition~\ref{prop} for DoF$_i=1/2$, $i=1, \mydots, K$, then all matrices obtained by scaling $\mathbf{H}$ are also in $\mathcal{S}$. 
\end{lemma}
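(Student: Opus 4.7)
The plan is to exploit the fact that scaling individual random variables by nonzero constants leaves entropy invariant, so that the defining inequality of $\mathcal{S}$ transfers almost tautologically from $\mathbf{H}$ to any scaled version $\tilde{\mathbf{H}} = \mathbf{D}\mathbf{H}\mathbf{D}'$, where $\mathbf{D} = \mathrm{diag}(p_1,\dots,p_K)$ and $\mathbf{D}' = \mathrm{diag}(c_1,\dots,c_K)$ have nonzero diagonals.

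First I would fix $\varepsilon \in (0,1/2)$ and invoke $\mathbf{H} \in \mathcal{S}$ to obtain independent discrete random variables $V_1,\dots,V_K$ of finite entropy satisfying (\ref{prop1}) for $\mathbf{H}$ with $\mathrm{DoF}_i = 1/2$. Next I would define $V_j' := V_j/c_j$ for $j=1,\dots,K$. Since $c_j \neq 0$, each $V_j'$ is a discrete random variable of finite entropy, and independence across $j$ is inherited from the $V_j$. The key algebraic identity is
\begin{equation*}
\sum_{j=1}^{K} \tilde{h}_{ij} V_j' \;=\; p_i \sum_{j=1}^{K} h_{ij} V_j, \qquad \sum_{j \neq i} \tilde{h}_{ij} V_j' \;=\; p_i \sum_{j \neq i} h_{ij} V_j,
\end{equation*}
because $\tilde{h}_{ij} = p_i c_j h_{ij}$. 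Multiplication by the nonzero constant $p_i$ leaves entropy of a discrete random variable unchanged, so both the numerator and the denominator of the ratio appearing in (\ref{prop1}) take identical values for the pair $(\mathbf{H},\{V_j\})$ and for the pair $(\tilde{\mathbf{H}},\{V_j'\})$.

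To close the argument I would invoke Lemma~\ref{lemma::EtkinPreserveDoF12}, which guarantees $\mathrm{DoF}_i(\tilde{\mathbf{H}}) = \mathrm{DoF}_i(\mathbf{H}) = 1/2$ for each $i$. Thus the inequality (\ref{prop1}) holds verbatim for $\tilde{\mathbf{H}}$ with the random variables $V_1',\dots,V_K'$, which is exactly the condition for membership in $\mathcal{S}$ corresponding to $\mathrm{DoF}_i = 1/2$. Since $\varepsilon$ was arbitrary in $(0,1/2)$, we conclude $\tilde{\mathbf{H}} \in \mathcal{S}$.

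There is no substantial obstacle here; the only point requiring a word of care is that the map $V_j \mapsto V_j/c_j$ is a bijection on the support of $V_j$ (as $c_j \neq 0$), which both preserves discreteness and finite entropy and preserves mutual independence across $j$. Once this is observed, the proof reduces to the scaling invariance of entropy and the DoF-invariance recorded in Lemma~\ref{lemma::EtkinPreserveDoF12}.
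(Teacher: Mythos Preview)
Your proposal is correct and follows essentially the same approach as the paper: both rescale the witnessing random variables via $V_j \mapsto V_j/c_j$ and use invariance of discrete entropy under nonzero scalar multiplication to transfer inequality~(\ref{prop1}) from $\mathbf{H}$ to $\tilde{\mathbf{H}}$. The only minor difference is that you invoke Lemma~\ref{lemma::EtkinPreserveDoF12} directly to secure $\mathrm{DoF}_i(\tilde{\mathbf{H}})=1/2$, whereas the paper routes through Proposition~\ref{prop:apply} with a suitably chosen $r$; your version is arguably more direct.
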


\begin{proof}
We consider a scaled version of $\mathbf{H} \in \mathcal{S}$ according to 
\begin{equation*}{\tilde{\mathbf{H}}}=\begin{pmatrix} r_1c_1 h_{11}&r_1c_2  h_{12}& \mydots & r_1c_K  h_{1K}\\ r_2c_1  h_{21}&r_2c_2  h_{22}& \mydots & r_2 c_K  h_{2K}\\ \mydots & \vdots & \vdots & \vdots \\ r_K c_1  h_{K1}&r_K c_2  h_{K2}& \mydots & r_K c_K  h_{KK}\end{pmatrix}\hspace{-2.6pt}, \end{equation*} 
where $r_i, c_j$, $i,j=1, \mydots, K$ are nonzero real numbers. First, we show that $\mathbf{\tilde{H}} \in \mathcal{S}$. To this end, let, for an arbitrary $\varepsilon \in (0,1/2)$, $V_1, \mydots, V_K$ be random variables corresponding to $\mathbf{H}$ in (\ref{prop1}) for DoF$_i=1/2$, $i=1, \mydots, K$. Starting from
\begin{equation} \label{eqAlmostAllPreProof}
\frac{1}{2}- \varepsilon \leqslant  \frac { \Big [{ H \Big ({\sum _{j=1}^{K} h_{ij} V_j}\Big )- H \Big ({ \sum _{j\neq i}^{K}  h_{ij} V_j }\Big )}\Big ]}{ \max _{i=1, \mydots ,K} H   \Big ({\sum _{j=1}^{K}   h_{ij} V_j}\Big )},
\end{equation} 
we define the random variables $\tilde{V}_i := V_i/c_i, i=1,\mydots,K$. Applying Proposition \ref{prop:apply} with $\Psi_i=\tilde{V}_i$, for $i=1, \mydots, K$, and $r= 2^{-\max_{i=1, \dots, K} H(\sum_{j=1}^{K} h_{ij}\tilde{V}_j)}$, we conclude that user $i$ in $\mathbf{\tilde{H}}$ achieves
 \begin{equation} \label{eqAlmostAllProof}
  \frac { \Big [{ H \Big ({r_i  \sum _{j=1}^{K} h_{ij} V_j}\Big )- H \Big ({r_i  \sum _{j\neq i}^{K} h_{ij} V_j }\Big )}\Big ]}{ \max _{i=1, \mydots ,K} H   \Big ({r_i \sum _{j=1}^{K}   h_{ij} V_j}\Big )}  =  \frac { \Big [{ H \Big ({\sum _{j=1}^{K} h_{ij} V_j}\Big )- H \Big ({ \sum _{j\neq i}^{K}  h_{ij} V_j }\Big )}\Big ]}{ \max _{i=1, \mydots ,K} H   \Big ({\sum _{j=1}^{K}   h_{ij} V_j}\Big )} 
\end{equation}
DoF, where the equality in (\ref{eqAlmostAllProof}) holds because scaling a random variable does not change its entropy, and the choice of $r$ ensures that the nontrivial terms are selected in both minima on the RHS of (\ref{eq:apply}). Noting that (\ref{eqAlmostAllPreProof}) holds for all $\varepsilon \in (0,1/2)$ and combining it with (\ref{eqAlmostAllProof}) establishes that $\tilde{\mathbf{H}}\,\in\,\mathcal{S}$.
\end{proof}

\subsection{Entropy difference between linear combinations of random variables} \label{appEntropyDiff}

\begin{theorem} \cite[Theorem 14]{Wu15}.
Let $X$ and $Y$ be independent $G$-valued random variables, where $G$ denotes an arbitrary abelian group. Let $p, q \in \mathbb{Z} \setminus \{ 0 \}$. Then, 
\begin{align}H(p X + q Y) - H(X+Y) \leqslant& \hspace{1mm} \tau_{p,q}(2 H(X+Y) - H(X) - H(Y)), \end{align}
where $\tau_{p,q}= 7 \lfloor \log |p| \rfloor + 7 \lfloor \log |q| \rfloor +2$.
\end{theorem}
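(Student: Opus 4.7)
The statement has the flavor of an entropic Plünnecke--Ruzsa bound: the ``entropy discrepancy'' $\Delta := 2H(X+Y) - H(X) - H(Y)$ (which is nonnegative for independent $X,Y$ in an abelian group, since $H(X+Y) \geq \max\{H(X), H(Y)\}$) should control how far the entropy of an integer linear combination $pX+qY$ can exceed $H(X+Y)$, with the overhead scaling logarithmically in $|p|$ and $|q|$. My plan is to first establish a basic entropic Ruzsa-type inequality, then bootstrap it through a binary-expansion / double-and-add argument on the coefficients, and finally track constants carefully to extract the explicit factor $7$.

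The first building block would be the entropy Ruzsa triangle inequality: for independent $G$-valued $X, Y, Z$,
\begin{equation*}
H(X+Y+Z) + H(Y) \leq H(X+Y) + H(Y+Z).
\end{equation*}
This follows by applying the chain rule and submodularity to the joint distribution of $(X+Y, Y+Z, X+Y+Z)$, together with the observation that $Z$ is recoverable from $(Y, Y+Z)$ and $X$ from $(Y, X+Y)$. Specializing $Z$ to an i.i.d. copy $X'$ of $X$ (resp. $Y'$ of $Y$), and using $H(X+X'+Y) \geq H(X+Y)$ together with the triangle inequality, yields the doubling-type bounds
\begin{equation*}
H(X+X'+Y) - H(X+Y) \leq H(X+Y) - H(Y) \leq \Delta,
\end{equation*}
and symmetrically with the roles of $X,Y$ swapped. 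Crucially, since scaling by a nonzero integer preserves entropy on the group $G$ (the distribution is just relabeled), the same bounds apply after replacing $X$ by $aX$ and $Y$ by $bY$ for any $a,b \in \mathbb{Z}\setminus\{0\}$.

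The second step is the combinatorial assembly. Write $|p|$ in binary as $|p| = \sum_{i=0}^{\lfloor\log|p|\rfloor} \epsilon_i 2^i$ with $\epsilon_i \in \{0,1\}$, and likewise for $|q|$. One then builds $pX + qY$ from $X+Y$ by alternating (a) ``doubling'' moves on the current $X$-coefficient, replacing $kX + \ell Y$ by $2kX + \ell Y$ via coupling with an i.i.d. copy $kX''$ and applying the doubling bound above, and (b) ``addition'' moves that add an independent copy of $\pm X$ when a bit $\epsilon_i$ is $1$. Doing the analogous thing on the $Y$-side and finally flipping signs (using $H(-W) = H(W)$) delivers $pX+qY$ using at most $\lfloor \log|p|\rfloor$ doublings on the $X$-side plus a comparable number of additions, and the same on the $Y$-side. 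Each step increases the entropy above $H(X+Y)$ by at most a fixed multiple of $\Delta$ — the factor arising because the triangle inequality must be invoked a small bounded number of times per step to introduce the necessary auxiliary i.i.d. copies and then eliminate them by monotonicity $H(A+B) \geq \max\{H(A), H(B)\}$. A careful accounting of these auxiliary invocations produces the constant $7$ per $\log$-term, plus the additive slack $2$ that absorbs the initial ``setup'' overhead (e.g.\ handling the sign of $p$ and $q$ and the base case $|p|=|q|=1$).

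The main obstacle will be the precise constant tracking. A qualitative bound of the form $O(\log|p| + \log|q|)\,\Delta$ falls out almost immediately from iterating the entropy Ruzsa triangle inequality, but pinning down the explicit factor $7$ requires an efficient choice of the auxiliary i.i.d. copies introduced at each doubling step and avoiding unnecessary conditioning losses; each suboptimal invocation would inflate the multiplier. A secondary subtlety is that the right-hand side is expressed in terms of the \emph{fixed} quantity $\Delta$ associated with the original $X,Y$, whereas the intermediate variables produced along the double-and-add process have their own ``discrepancies''. Thus the induction must be organized so that each step's cost is bounded by $\Delta$ itself, rather than by a drifting discrepancy of the current intermediate sum — this is exactly where the entropy Plünnecke--Ruzsa machinery of \cite{Tao10} does the essential work.
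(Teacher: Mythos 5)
The paper does not prove this statement at all: it is imported verbatim as \cite[Theorem 14]{Wu15} and used as a black box, so the only meaningful comparison is with the proof in that reference. Your strategy---the entropic Ruzsa triangle inequality, doubling via i.i.d.\ copies, and a double-and-add pass over the binary expansions of $|p|$ and $|q|$---is essentially the strategy of the original proof in \cite{Wu15}, which runs the recursion through their Lemma~18 (restated in Appendix~G of this paper) together with the monotonicity $H(A+B)\geqslant \max\{H(A),H(B)\}$.

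That said, as written your proposal has a genuine gap: the entire quantitative content of the theorem is the explicit constant $\tau_{p,q}=7\lfloor \log|p|\rfloor+7\lfloor\log|q|\rfloor+2$, and your argument never produces it---you assert that ``a careful accounting \ldots produces the constant $7$'' while conceding that this accounting is the main obstacle. An $O(\log|p|+\log|q|)\,\Delta$ bound is not the theorem, and it is the explicit constant that the rest of this paper actually uses. Concretely, two ingredients needed for that accounting are missing from your sketch. First, each doubling step in \cite{Wu15} costs $\Delta(X,X')=H(X-X')-\tfrac12 H(X)-\tfrac12 H(X')$, i.e.\ it involves the entropy of a \emph{difference} of i.i.d.\ copies, and relating $H(X-X')$ back to the fixed discrepancy $2H(X+Y)-H(X)-H(Y)$ requires the sum--difference comparison $\tfrac12\leqslant I(X_1+X_2;X_2)/I(X_1-X_2;X_2)\leqslant 2$ of Madiman--Kontoyiannis (restated in Appendix~F here); the identity $H(-W)=H(W)$ that you invoke does not bridge $H(X-X')$ and $H(X+X')$, and this comparison is precisely one of the places the constant $7$ comes from. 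Second, your observation that each step must be charged against the \emph{original} $\Delta$ rather than the drifting discrepancy of the current intermediate variable is the crux, and you flag it without resolving it: absent an invariant of that form, the per-step costs compound and no bound linear in $\lfloor\log|p|\rfloor+\lfloor\log|q|\rfloor$ emerges. So the route is the right one, but what you have is a plan for the proof rather than a proof.
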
 
\subsection{Entropy difference for i.i.d. random variables} \label{appEntropyIID}
We restate (a slight variation of) \cite[Th 3.5]{Madiman14}. For i.i.d. discrete random variables $X_1, X_2$,
\begin{equation} \label{EqEntrDiffApp}
\frac{1}{2} \leqslant \frac{I(X_1+X_2;X_2)}{I(X_1-X_2;X_2)} \leqslant 2.
\end{equation}
With $I(X;Y)= H(X)-H(X|Y)$, (\ref{EqEntrDiffApp}) becomes
\begin{equation*}
\frac{1}{2} \leqslant \frac{H(X_1+X_2) - H(X_1+X_2 | X_2) }{H(X_1-X_2) - H(X_1-X_2 | X_2) } \leqslant 2,
\end{equation*}
which is  equivalent to 
\begin{equation} \label{eqEntropyDifferenceIID}
\frac{1}{2} \leqslant \frac{H(X_1+X_2) - H(X_1) }{H(X_1-X_2) - H(X_1) } \leqslant 2.
\end{equation}

\subsection{A slight variation of \cite[Lem. 18]{Wu15}} \label{appWuMoreGeneral}
We formulate a slight variation of statement \cite[Eq. 284]{Wu15} in  \cite[Lem. 18]{Wu15}. 
\begin{lemma} \label{lastEverLemma}
 Let $X, X'$, and $Z$ be independent $G$-valued random variables, where $G$ is an abelian group and $X'$ has the same distribution as $X$. Let $p, r \in \mathbb{R}$. Then, 
\begin{equation*}
H(pX+Z) \leqslant H((p-r)X+rX'+Z)+ \Delta(X,X'),
\end{equation*}
where $\Delta(X,X')=H (X-X')-\frac{1}{2} H(X) -\frac{1}{2} H(X')$.
\begin{remark}
The only difference between  Lemma \ref{lastEverLemma} here and \cite[Lem. 18]{Wu15} is that \cite[Lem. 18]{Wu15} applies to $p, r \in \mathbb{Z} \setminus \{0\}$, whereas Lemma \ref{lastEverLemma} holds for $p,r \in \mathbb{R}$. Step-by-step inspection of the proof of  \cite[Lem. 18]{Wu15} reveals, however, that the result holds true more generally for $p, r \in \mathbb{R}$. 
 \end{remark}
\end{lemma}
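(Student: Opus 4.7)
The plan is to revisit the proof of \cite[Lem.~18]{Wu15} and verify that it extends verbatim once one observes that, in the setting where the lemma is applied, $G$ is the additive group of $\mathbb{R}$ (or more generally an $\mathbb{R}$-module), so that scalar multiplication by any real number is well-defined and, for nonzero scalars, is entropy-preserving via the change-of-variables identity for invertible maps. With this in mind, I would first isolate the purely algebraic identity on which the original argument hinges, namely
\[
pX + Z \;=\; \bigl((p-r)X + rX' + Z\bigr) + r(X - X'),
\]
which is valid in $G$ whenever $p$ and $r$ act on $G$. Setting $W := (p-r)X + rX' + Z$ and $V := X-X'$, the identity says that the pair $(W, V)$ determines $pX+Z$, and conversely $W = (pX+Z) - rV$, so $H(W,V) = H(pX+Z,V)$. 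This is exactly the reduction step that lets the proof of \cite[Lem.~18]{Wu15} trade $pX+Z$ for $W$ at the cost of a Ruzsa-distance term involving $X - X'$.

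Next, I would replay the remainder of the proof of \cite[Lem.~18]{Wu15}, which combines the above identity with subadditivity, the chain rule, and the symmetry between the i.i.d.\ copies $X$ and $X'$ to sharpen the crude bound $H(pX+Z) \leqslant H(W) + H(X-X')$ down to $H(pX+Z) \leqslant H(W) + \Delta(X,X')$. The key point is that each inequality in that chain is either a structural entropy inequality (independent of the coefficients) or an application of the algebraic identity above with particular choices of indices; in neither case is it exploited that $p$ or $r$ is an integer. What is used is only (i) that $pX$, $rX$, $rX'$, and $(p-r)X$ are well-defined $G$-valued random variables and (ii) that $H(\alpha Y) = H(Y)$ for every nonzero scalar $\alpha$ whenever $Y$ has finite entropy and scaling by $\alpha$ is a bijection on $G$, both of which hold for $p, r \in \mathbb{R}$ in our setting.

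The main obstacle will be to identify any step in the original proof that tacitly uses integrality, for instance a decomposition of $pX$ as a sum $X_1 + \cdots + X_{|p|}$ of i.i.d.\ copies, or a convolution-counting argument. If such a step appears, I would replace it by a direct appeal to the identity $pX + Z = W + r(X-X')$ together with the entropy-invariance under nonzero scaling noted above; these replacements leave the remainder of the chain of inequalities unchanged. Once every step has been checked, the bound
\[
H(pX+Z) \;\leqslant\; H\bigl((p-r)X + rX' + Z\bigr) + \Delta(X,X')
\]
follows for all $p, r \in \mathbb{R}$, exactly as stated.
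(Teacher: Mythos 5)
Your proposal is correct and takes essentially the same route as the paper, whose entire ``proof'' is the remark that a step-by-step inspection of the proof of \cite[Lem.~18]{Wu15} shows that integrality of $p$ and $r$ is never used; you simply carry out that inspection explicitly, correctly isolating the decomposition $pX+Z=\bigl((p-r)X+rX'+Z\bigr)+r(X-X')$ and the fact that every remaining step is a coefficient-independent entropy inequality. Your additional observation that the scalars must act on $G$ (i.e., $G$ is an $\mathbb{R}$-module, as it is in every application here where $G=\mathbb{R}$) is a point the paper glosses over, but it does not change the argument.
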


\subsection{Restatement of \cite[Th. 8]{Wu15}} \label{4Th8restate}
\begin{theorem}\cite[Th. 8]{Wu15}.
Let $\mathbf{H}$ be a $3$-user IC matrix $\mathbf{H}$ with all off-diagonal entries algebraic numbers. If there exist distinct $i, j, k$ such that $h_{ij}, h_{ii}, h_{kj}$, and $h_{ki}$ are non-zero rational numbers, then the total number of DoF of $\mathbf{H}$ is strictly smaller than 3/2. 
\end{theorem}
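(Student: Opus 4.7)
My plan is to reduce the claim to the necessity direction of Theorem~\ref{thm1} by exhibiting an explicit violation of Condition~($\ast$) that persists under every scaling. Since Theorem~\ref{thm1} guarantees that, for almost all $\mathbf{H}$, $1/2$ DoF for each user requires the existence of at least one scaled version of $\mathbf{H}$ satisfying Condition~($\ast$), it suffices to show that under the hypothesis $\tfrac{h_{ii}h_{kj}}{h_{ij}h_{ki}}\in\mathbb{Q}\setminus\{0\}$, \emph{every} scaled version of $\mathbf{H}$ violates Condition~($\ast$) at the user $i$.

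First I would parametrize an arbitrary scaled version of $\mathbf{H}$ as $\tilde{h}_{mn} = r_m c_n h_{mn}$ for nonzero real scalars $r_m, c_n$, and observe the key algebraic identity
\begin{equation*}
\frac{\tilde{h}_{ii}\tilde{h}_{kj}}{\tilde{h}_{ij}\tilde{h}_{ki}} = \frac{(r_ic_i h_{ii})(r_kc_j h_{kj})}{(r_ic_j h_{ij})(r_kc_i h_{ki})} = \frac{h_{ii}h_{kj}}{h_{ij}h_{ki}} = \frac{a}{b},
\end{equation*}
where $a,b\in\mathbb{Z}\setminus\{0\}$ come from the rationality hypothesis. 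The scalings $r_m, c_n$ cancel, so this ratio is a scaling invariant. Rearranging yields $b\,\tilde{h}_{kj}\cdot\tilde{h}_{ii} - a\,\tilde{h}_{ij}\tilde{h}_{ki} = 0$, i.e., $\tilde{h}_{ii}$ is expressed as a quotient of integer-coefficient monomials in the off-diagonal entries of $\tilde{\mathbf{H}}$.

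Next I would invoke the equivalent formulation of Condition~($\ast$) from Remark~\ref{remarkEquivalentFormulation}: Condition~($\ast$) for user $i$ requires $\tilde{h}_{ii}Q - P \neq 0$ for all nonzero integer-coefficient polynomials $P, Q$ in the off-diagonal entries of $\tilde{\mathbf{H}}$. Setting $P := a\,\tilde{h}_{ij}\tilde{h}_{ki}$ and $Q := b\,\tilde{h}_{kj}$ produces exactly $\tilde{h}_{ii}Q - P = 0$. Both $P$ and $Q$ are nonzero monomials with integer coefficients in the off-diagonal entries of $\tilde{\mathbf{H}}$: the hypothesis that $\tfrac{h_{ii}h_{kj}}{h_{ij}h_{ki}}$ is a well-defined nonzero rational forces $h_{ij}, h_{ki}, h_{kj}$ to be nonzero, a property preserved under scaling by nonzero constants. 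Hence Condition~($\ast$) is violated at user $i$ in \emph{every} scaled version of $\mathbf{H}$, including $\mathbf{H}$ itself.

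Finally I would apply the necessity part of Theorem~\ref{thm1}, which holds for almost all IC matrices, to conclude that $1/2$ DoF for each user cannot be achieved. I do not anticipate a real obstacle: the whole argument rests on the observation that the quantity $\tfrac{h_{ii}h_{kj}}{h_{ij}h_{ki}}$ is invariant under the scaling group of Definition~\ref{defScaling} and matches exactly the algebraic form prohibited by the equivalent formulation of Condition~($\ast$). The only subtlety worth emphasizing in the write-up is the nonvanishing of $P$ and $Q$, which, as noted, follows automatically from the hypothesis.
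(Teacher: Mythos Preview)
Your argument is essentially the paper's proof of Theorem~\ref{appTheoremGen}, not a proof of the stated result. The paper does \emph{not} prove this theorem at all: it is a restatement of an external result from \cite{Wu15}, cited in Appendix~\ref{4Th8restate} precisely so that it can be used as a black-box tool in Section~\ref{section::3UserNonFullyConnected}. There is therefore no ``paper's own proof'' to compare against.

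More importantly, your route does not establish the statement as written. Two gaps are decisive. First, the conclusion of \cite[Th.~8]{Wu15} is that the \emph{total} number of DoF is strictly less than $3/2$, for \emph{every} $\mathbf{H}$ satisfying the hypothesis; you conclude only that $1/2$ DoF for each user cannot be achieved, and only for \emph{almost all} $\mathbf{H}$, because that is what the necessity part of Theorem~\ref{thm1} delivers. Second, and fatally, the hypothesis of \cite[Th.~8]{Wu15} restricts to matrices whose off-diagonal entries are algebraic and with $h_{ii},h_{ij},h_{ki},h_{kj}$ rational: this is a Lebesgue-null set in $\mathbb{R}^{3\times 3}$, so the ``almost all'' qualifier in Theorem~\ref{thm1} gives you no purchase whatsoever on precisely the matrices you need. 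The paper is explicit about this complementarity in Section~\ref{sec:applications}, where it notes that Theorem~\ref{appTheoremGen} ``applies to almost all channel matrices whereas \cite[Th.~8]{Wu15} applies to the measure-zero set of channel matrices with algebraic off-diagonal entries only.'' In short, your scaling-invariance observation is correct and is exactly how the paper proves Theorem~\ref{appTheoremGen}, but it cannot recover \cite[Th.~8]{Wu15} itself; indeed, the paper relies on \cite[Th.~8]{Wu15} as an independent input to upgrade necessity from ``almost all'' to ``all'' in the $3$-user non-fully-connected case.
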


\bibliographystyle{IEEEtran}
\bibliography{Zitat}

\end{document}